\newcommand{\fp}{fi\-xed-pa\-ram\-e\-ter}
\newdefinition{problem}{Problem}
\newtheorem{theorem}{Theorem}
\newtheorem{lemma}{Lemma}
\newtheorem{corollary}{Corollary}
\newdefinition{rrule}{Reduction Rule}
\crefname{rrule}{Reduction Rule}{Reduction Rules}
\crefname{observation}{Observation}{Observations}
\def\NAT@spacechar{~}%
\newcommand\dego{\text{deg}^{\text{out}}}
\newcommand\degi{\text{deg}^{\text{in}}}
\newcommand\Ni{N^{\text{in}}}
\newcommand\scites{\ensuremath{\mathrm{sum\-Cite}}}
\newcommand\ucites{\ensuremath{\mathrm{un\-ion\-Cite}}}
\newcommand\mcites{\ensuremath{\mathrm{fu\-sion\-Cite}}}
\newcommand{\hind}{H-in\-dex}
\newcommand{\Hind}{H-In\-dex}
\newcommand{\hinds}{H-in\-dices}
\newcommand{\merge}{\ensuremath{\mathcal P}}
\newcommand{\mergelt}{\ensuremath{P}}
\newcommand{\tbpoitsat}{\textsc{3-Bounded Positive 1-in-3-SAT}}
\newcommand{\hindM}{\textsc{\Hind{} Ma\-nip\-u\-la\-tion}}
\newcommand{\hindkM}{\textsc{Cautious \Hind{} Ma\-nip\-u\-la\-tion}}
\newcommand{\hindI}{\textsc{\Hind{} Im\-prove\-ment}}
\newcommand{\pIS}{\textsc{Independent Set}}
\newcommand{\decprob}[3]{%
\noindent
\begin{quotation}
  \noindent
    \begin{minipage}{1.0\linewidth}
      \noindent\textsc{#1}
      \begin{compactdesc}
      \item[Input:] #2
      \item[Question:] #3
      \end{compactdesc}
    \end{minipage}
  \end{quotation}}
\begin{document}
\begin{frontmatter}
  \title{H-Index Manipulation by Merging Articles: Models, Theory, and Experiments\tnoteref{t1}}

  \tnotetext[t1]{An extended abstract of this article appeared at IJCAI~2015 \citep{BKNSW15}.  This version provides full proof details, new kernelization results, as well as additional experiments.}

  \author[nsu,imsoran,tub]{René van Bevern\corref{cor1}}
  \ead{rvb@nsu.ru}
  \cortext[cor1]{Corresponding author}

  \author[jen,tub]{Christian Komusiewicz}
  \ead{christian.komusiewicz@uni-jena.de}

  \author[tub]{Rolf Niedermeier}
  \ead{rolf.niedermeier@tu-berlin.de}

  \author[tub]{Manuel Sorge}
  \ead{manuel.sorge@tu-berlin.de}

  \author[tub,unsw,nicta]{Toby Walsh}
  \ead{toby.walsh@nicta.com.au}

  \address[nsu]{Novosibirsk State University, Novosibirsk, Russian Federation}

  \address[imsoran]{Sobolev Institute of Mathematics, Siberian Branch of the Russian Academy of Sciences, Novosibirsk, Russian Federation}

  \address[jen]{Institut f\"ur Informatik, Friedrich-Schiller-Universität Jena, Germany}

  \address[tub]{Institut f\"ur Softwaretechnik und Theoretische
    Informatik, TU Berlin, Germany}

  \address[unsw]{University of New South Wales, Sydney, Australia}

  \address[nicta]{Data61, Sydney, Australia}

\begin{abstract}
  An author's profile on Google Scholar consists of indexed articles and associated data, such as the number of citations and the \hind{}. The author is allowed to merge articles; this may affect the \hind{}. We~analyze the (parameterized) computational complexity of maximizing the \hind{} using article merges. Herein, to model realistic manipulation scenarios, we define a compatibility graph whose edges correspond to plausible merges. Moreover, we consider several different measures for computing the citation count of a merged article. For the measure used by Google Scholar, we give an algorithm that maximizes the \hind{} in linear time if the compatibility graph has constant-size connected components.  In contrast, if we allow to merge arbitrary articles (that is, for compatibility graphs that are cliques), then already increasing the \hind{} by one is NP-hard. Experiments on Google Scholar profiles of AI researchers show that the \hind{} can be manipulated substantially only if one merges articles with highly dissimilar titles.
\end{abstract}
\begin{keyword}
  Citation index\sep Hirsch index\sep parameterized complexity
  \sep exact algorithms\sep AI's 10 to watch
\end{keyword}
\end{frontmatter}
\section{Introduction}
\thispagestyle{scrheadings}
\ifoot{\footnotesize\rule{6.5cm}{0.5pt}\\
\parbox{1.0\linewidth}{© 2016.  This manuscript is made available under the CC-BY-NC-ND~4.0 license \url{http://creativecommons.org/licenses/by-nc-nd/4.0/} and was published in \emph{Artificial Intelligence}, doi:10.1016/j.artint.2016.08.001}}
\ihead{}
\ohead{}
\noindent
The \hind{} is a widely used measure for estimating the productivity
and impact of researchers, journals, and institutions. \citet{Hir05}
defined the index as follows: a researcher has \hind{} $h$ if $h$ of
the researcher's articles have at least~$h$ citations and all other
articles have at most~$h$ citations.  Several publicly accessible
databases such as AMiner, Google Scholar, Scopus, and Web of
Science compute the \hind{} of researchers.  Such metrics are
therefore visible to hiring committees and funding agencies when
comparing researchers and proposals.\footnote{Our study on \hind{} manipulation is not meant to endorse or discourage the use of the \hind{} as an evaluation tool. In this regard, we merely aim to raise awareness for the various possibilities for manipulation.}

Although the \hind{} of Google Scholar profiles is computed
automatically, profile owners can still affect their \hind{} by
merging articles in their profile. The intention of providing the
option to merge articles is to enable researchers to identify
different versions of the same article. For example, a researcher may
want to merge a journal version and a version on arXiv.org, which are
found as two different articles by Google's web crawlers. 
This may decrease
a researcher's \hind{} if both articles
counted towards it before merging, or
increase the \hind{} since the merged article may have more
citations than each of the individual articles.  Since the Google Scholar
interface permits to merge arbitrary pairs of articles, this leaves the
\hind{} of Google Scholar profiles vulnerable to manipulation by
insincere authors.

In extreme cases, the merging operation may yield an arbitrarily large \hind{} even if each single article is cited only a few times: If the author has, for example, $h^2$~articles that are cited once, each by a distinct article from another author, then the \hind{} of the profile is~1. Creating $h$~merged articles, each consisting of~$h$ original articles, gives a profile with \hind{}~$h$.  This is the maximum \hind{} achievable with \(h^2\)~citations.

Increasing the \hind{} even by small values could be tempting in particular for young researchers, who are scrutinized more often than established researchers.\footnote{In fact, for senior researchers with many citations, the \hind{} is barely more expressive than the total citation count~\citep{Yon14}.}  \citet{Hir05} estimates that, for the field of physics, the \hind{} of a successful researcher increases by roughly one per year of activity. Hence, an insincere author might try to save years of research work with the push of a few buttons.

\hind{} manipulation by article merging has been studied by \citet{KA13}. In their
model, each article in a profile comes with a number of
citations. Merging two articles, one with~$x$ and one with
$y$~citations, replaces these articles by a new article with~$x+y$
citations. The obtained article may then be merged with further articles to
obtain articles with even higher citation numbers. In this model, one
can determine in polynomial time whether it is possible to improve the
\hind{} by merging, but maximizing the \hind{} by merging is strongly NP-hard~\citep{KA13}.
We extend the results of \citet{KA13} as follows.
\begin{enumerate}
\item We propose two further ways of measuring the number of citations of a merged article. One of them seems to be the measure used by Google Scholar.
\item We propose a model for restricting the set of allowed merge operations. Although Google Scholar allows merges between arbitrary articles, such a restriction is well motivated: An insincere author may try to merge only similar articles in order to conceal the manipulation.
\item We consider the variant of \hind{} manipulation in which only a limited number of merges may be applied in order to achieve a desired \hind{}. This is again motivated by the fact that an insincere author may try to conceal the manipulation by performing only few changes to her or his own profile.
\item We analyze each problem variant presented here within the framework of parameterized computational complexity \citep{CFK+15,DF13,FG06,Nie06}.
That is, we identify parameters~$p$---properties of the input measured in integers---and
aim to design fixed-parameter algorithms, which have running time~$f(p) \cdot n^{O(1)}$ for a computable function~\(f\) independent of the input size~\(n\).  In some cases, this allows us to give efficient algorithms for realistic problem instances despite the NP-hardness of the problems in general.   We also show parameters that presumably cannot lead to fixed-parameter algorithms by showing some problem variants to be \emph{W[1]-hard} for these parameters.
\item We evaluate our theoretical findings by performing experiments with real-world data based on the publication profiles of AI~researchers. In particular, we use profiles of some young and up-and-coming researchers from the 2011 and 2013 editions of the IEEE ``AI's 10 to watch'' list~\cite{ai10tw11,ai10tw13}.
\end{enumerate}

\paragraph{Related work}
Using the models introduced here, Elkind and Pavlou~\cite{EP16}
recently studied manipulation for two alternatives to the \hind{}: the
$i10$-index, the number of articles with at least ten citations, and
the $g$-index~\cite{Egghe2006}, which is the largest number~\(g\)
such that the \(g\)~most-cited
articles are cited at least \(g\)~times
\emph{on average}. They also considered the scenario where merging
articles can influence the profiles of \emph{other} authors.  In a
follow-up work to our findings, we analyzed the complexity of
\emph{unmerging} already merged articles so to manipulate the
\hind{} with respect to the citation measures introduced
here~\cite{BKM+16}. Notably, in the model corresponding to Google
Scholar, the complexity is much lower for unmerging rather than for
merging articles.

A different way of manipulating the \hind{} is by
strategic self-citations \citep{LRT14,Vin13}; \citet{BK11} consider
approaches to detect these.  Strategic self-citations take some effort and are irreversible. Thus, they can permanently damage an author's reputation. In comparison, article merging is easy, reversible and usually justified.

\citet{BK15} showed that, in a previous version of the Google Scholar
interface, which only allowed merges of articles displayed together on one page, it was NP-hard to decide whether a given set of articles can be merged at all. %

The problem of maximizing the \hind{} in the model of \citet{KA13} is
essentially a special case of the scheduling problems~\textsc{Bin
  Covering}~\cite{AJKL84,CCG+13} and \textsc{Machine Covering}~\cite{FD81,WWL16}.

A considerable body of work on manipulation can be found in the
computational social choice literature~\citep{fpaimag10,fhhcacm10}.
If we view citations as articles voting on other articles, then the
problem we consider here is somewhat analogous to strategic
candidacy \citep{djbecon2001}.

\subsection{Our models}\label{sec:mod}
\noindent
We propose two new models for the merging of articles. These models
take into consideration two aspects that are not captured by the model of
\citet{KA13}:
\begin{enumerate}
\item The number of citations of an article resulting from a merge is
  not necessarily the sum of the citations of the merged articles. %
  This is in particular the case for Google
  Scholar.
\item %
  In order to hide manipulation, it would be desirable to
  only merge related articles instead of arbitrary ones. For example, one could only merge articles with similar titles.
\end{enumerate}

\noindent To capture the second aspect, our model allows for
constraints on the compatibility of articles. To capture the first
aspect, we represent citations not by mere citation counts, but using
a directed \emph{citation graph}~$D=(V,A)$. The vertices of~$D$ are the
articles of the author's profile plus the articles that cite them, and
there is an arc~$(u, v)$ in~$D$ if article~$u$ cites article~$v$.

To simplify notation, we assume from now on that we are an author who wants to maximize her or his \hind{} by merging articles. Let~$W\subseteq V$ denote the articles in our profile. In the following, these articles are called \emph{atomic articles} and we aim to maximize our \hind{} by merging some articles in~$W$. The result of a sequence of article merges is a partition~$\merge$ of~$W$. We call each part $P \in \merge$ with $|P| \geq 2$ a \emph{merged article}. Note that having a merged article~$P$ corresponds to performing $|P| - 1$ successive merges on the articles contained in~$P$. It is sometimes convenient to alternate between the partitioning and merging interpretations.

The aim is to find a partition~$\merge$ of $W$ with a
large \hind{}, where the \emph{\hind{} of a partition~$\merge$} is the
largest number~$h$ such that there are at least $h$~parts~$P\in\merge$
whose number~$\mu(P)$ of citations is at least~$h$. Herein, we have
multiple possibilities of defining the measure~$\mu(P)$ of citations of
an article in~$\merge$. Before describing these possibilities, we
introduce some further notation.

\looseness=-1
Let $\degi_D(v)$~denote the indegree of an article~$v$ in the citation graph~$D$, that is, its number of citations.  Analogously, we will use $\dego_D(v)$~to denote the outdegree of an article~\(v\) in the citation graph~\(D\).  Moreover, let~$\Ni_D(v) := \{u \mid (u, v) \in A\}$ denote the set of articles that cite~$v$ and let $\Ni_{D - W}(v) := \{u \mid (u, v) \in A \wedge u \notin W\}$ denote the set of articles that cite~$v$ and are not contained in~$W$ (thus, the articles in $\Ni_{D - W}(v)$ cannot be merged).  For each part~$\mergelt\in\merge$, we consider the following three citation measures for defining the number~$\mu(P)$ of citations of~$P$. They are illustrated in \cref{fig:mergevars}.  The measure
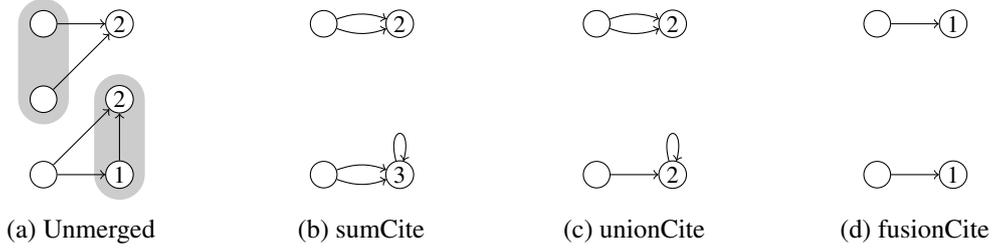
\begin{figure}
  \centering
  \hfill
  \begin{subfigure}{2cm}
    \centering
    \begin{tikzpicture}
      \tikzstyle{edge} = [color=black,opacity=.2,line cap=round, line
      join=round, line width=19pt]

      \tikzstyle{vertex}=[circle,draw,fill=white,minimum
      size=10pt,inner sep=1pt,font=\footnotesize]

      \tikzstyle{citation}=[->] \tikzstyle{mergeable}=[very thick]
      
      \node[vertex] (1) at (0,0) {}; \node[vertex] (2) at (0,1) {};
      \node[vertex] (3) at (0,2) {};

      \node[vertex] (4) at (1,0) {1}; \node[vertex] (5) at (1,1) {2};
      \node[vertex] (6) at (1,2) {2};

      \draw[citation] (1)--(4); \draw[citation] (1)--(5);
      \draw[citation] (2)--(6); \draw[citation] (3)--(6);
      \draw[citation] (4)--(5);
      \begin{pgfonlayer}{background}
        \draw[edge] (2.center)--(3.center); \draw[edge]
        (4.center)--(5.center);
      \end{pgfonlayer}
    \end{tikzpicture}
    \caption{Unmerged}
    \label{fig:unmerged}
  \end{subfigure}
  \hfill
  \begin{subfigure}{1.8cm}
    \centering
    \begin{tikzpicture}
      \tikzstyle{edge} = [color=black,opacity=.2,line cap=round,
      line join=round, line width=19pt]
      
      \tikzstyle{vertex}=[circle,draw,fill=white,minimum
      size=10pt,inner sep=1pt,font=\footnotesize]
      
      \tikzstyle{citation}=[->] \tikzstyle{mergeable}=[very thick]
        
      \node[vertex] (1) at (0,0) {}; \node[vertex] (3) at (0,2) {};
      
      \node[vertex] (4) at (1,0) {3}; \node[vertex] (6) at (1,2)
      {2};
      
      \draw[citation] (1)to[out=20,in=160](4); \draw[citation]
      (1)to[out=-20,in=-160](4); \draw[citation]
      (3)to[out=20,in=160](6); \draw[citation]
      (3)to[out=-20,in=-160](6);

      \draw[citation, loop above] (4)to (4);
      \begin{pgfonlayer}{background}
        \draw[edge,transparent] (2.center)--(3.center); \draw[edge,transparent]
        (4.center)--(5.center);
      \end{pgfonlayer}
    \end{tikzpicture}
    \caption{$\scites$}\label{fig:scites}
  \end{subfigure}\hfill
  \begin{subfigure}{2cm}
    \centering
    \begin{tikzpicture}
      \tikzstyle{edge} = [color=black,opacity=.2,line cap=round, line
      join=round, line width=19pt]
      
      \tikzstyle{vertex}=[circle,draw,fill=white,minimum size=10pt,inner
      sep=1pt,font=\footnotesize]
      
      \tikzstyle{citation}=[->] \tikzstyle{mergeable}=[very thick]
      
      \node[vertex] (1) at (0,0) {}; \node[vertex] (3) at (0,2) {};
      
      \node[vertex] (4) at (1,0) {2}; \node[vertex] (6) at (1,2) {2};
      
      \draw[citation, loop above] (4)to (4);

      \draw[citation] (1)--(4); \draw[citation] (3)to[out=20,in=160](6);
      \draw[citation] (3)to[out=-20,in=-160](6);
      \begin{pgfonlayer}{background}
        \draw[edge,transparent] (2.center)--(3.center); \draw[edge,transparent]
        (4.center)--(5.center);
      \end{pgfonlayer}
    \end{tikzpicture}
    \caption{$\ucites$}
    \label{fig:ucites}
  \end{subfigure}\hfill
  \begin{subfigure}{2cm}
    \centering
    \begin{tikzpicture}
      \tikzstyle{edge} = [color=black,opacity=.2,line cap=round,
      line join=round, line width=19pt]
      
      \tikzstyle{vertex}=[circle,draw,fill=white,minimum
      size=10pt,inner sep=1pt,font=\footnotesize]
      
      \tikzstyle{citation}=[->] \tikzstyle{mergeable}=[very thick]
      
      \node[vertex] (1) at (0,0) {}; \node[vertex] (3) at (0,2) {};
      
      \node[vertex] (4) at (1,0) {1}; \node[vertex] (6) at (1,2)
      {1};
      
      \draw[citation] (1)--(4); \draw[citation] (3)--(6);
      \begin{pgfonlayer}{background}
        \draw[edge,transparent] (2.center)--(3.center); \draw[edge,transparent]
        (4.center)--(5.center);
      \end{pgfonlayer}
    \end{tikzpicture}
    \caption{$\mcites$}
    \label{fig:mcites}
  \end{subfigure}
  \hspace*{\fill}
  \caption{Vertices represent articles in our profile~\(W\), arrows represent citations, numbers are citation counts (note that, in general, there may be articles in~\(V\setminus W\), which are not in our profile and not displayed here).  The articles on a gray background in \subref{fig:unmerged} have been merged in \subref{fig:scites}--\subref{fig:mcites}, and citation counts are given according to the measures $\scites$, $\ucites$, and $\mcites$, respectively.  The arrows represent the citations counted by the corresponding measure.}
  \label{fig:mergevars}
\end{figure}
\begin{align*}
  \scites(\mergelt)&:=\sum_{v\in \mergelt}\degi_D(v)\\
  \intertext{defines the number of citations of a merged article~$P$
    to be the sum of the citations of the atomic articles it contains.
    This is the measure proposed by \citet{KA13}. In contrast, the
    measure}
  \ucites(\mergelt)&:=\Bigl|\bigcup_{v\in \mergelt}\Ni_D(v)\Bigr|\\
  \intertext{defines the number of citations of a merged article~$P$
    as the number of distinct atomic articles citing at least one
    atomic article in~$P$.  We verified empirically that, at the time
    of writing, Google Scholar used the $\ucites$~measure.
    The~measure} \mcites_\merge(\mergelt)&:=\Bigl|\bigcup_{v\in
    \mergelt}\Ni_{D- W}(v)\Bigr| +
  \smashoperator[l]{\sum_{\substack{\mergelt'\in\,\merge\setminus\, \{\mergelt}\}}}
  \begin{cases}
    1&\text{if }\exists v\in\mergelt' \exists w\in\mergelt:(v,w)\in A,\\
    0&\text{otherwise}
  \end{cases}
\end{align*}
is, in our opinion, the most natural one: a set of merged articles is indeed considered to be one article, that is, at most one citation of a
part~$P'\in\merge$ to a part~$P\in\merge$ is counted.
In contrast to
the two other measures, merging two articles under the
$\mcites$~measure may lower the number of citations of the resulting
article and of other articles. Note that, in contrast to \ucites{} and \scites{}, the number of citations of an article according to $\mcites_\merge$ may depend on the partition~\merge. We omit the index $\merge$ where it is clear from the context.  %

To model constraints on permitted article merges, we
consider an undirected \emph{compatibility graph}~$G=(V,E)$. We call
two articles \emph{compatible} if they are adjacent in~$G$. We say
that a partition~$\merge$ of the articles~$W$ \emph{complies} with~$G$
if, for each part~$\mergelt \in \merge$, all articles in~$P$ are
pairwise compatible, that is, if the subgraph~$G[P]$ of~\(G\) induced by~\(P\) is a clique. Thus, if the
compatibility graph~$G$ is a clique, then there are no constraints:
all partitions of~$W$ comply with~$G$ in this case.

Formally, for each measure
$\mu\in\{\scites,\allowbreak\ucites,\allowbreak\mcites\}$, we are
interested in the following problem:

\decprob{\hindM{}($\mu$)}
{A citation graph~$D=(V,A)$, a compatibility graph~$G=(V,E)$, a set $W\subseteq V$ of articles, and a non-negative integer~$h$.}
{Is there a partition of~$W$ that complies with~$G$ and that has \hind{} at least~$h$ with respect to measure~$\mu$?}

\noindent Throughout this work, we use $n:=|V|$ to denote the number of input
articles and $m:=|E|+|A|$ to denote the overall number of edges and
arcs in the two input graphs.  Moreover, we use \(G[S]\)~to denote the subgraph of~\(G\) induced by a subset~\(S\) of its vertices.

\subsection{Our results} 
\noindent
We study the complexity of \hindM{} with
respect to several structural features of the input instances. In particular, we consider the following three parameters:
\begin{itemize}
\item The size~$c$ of the largest connected component in the
  compatibility graph~$G$. We expect this size to be small if only
  reasonable merges are allowed or if all merges have to appear
  reasonable, that is, if compatible articles should superficially look
  similar.
\item The number~$k$ of merges.  %
  An insincere author would hide manipulations using
  a small number of merges.
\item The \hind{} to be achieved.  Although one is interested in
  maximizing the \hind{}, we expect this number also to be relatively
  small, since even experienced researchers seldom have an \hind{} of
  greater than~50.\footnote{More than 99.99 \% of the authors listed at
    \texttt{aminer.org} (accession date 2/27/2016) have an \hind{} of at most 50.}
\end{itemize}
\cref{tab:res} summarizes our theoretical results. For example, we find that,
with respect to the $\ucites{}$ measure used by Google Scholar, it is
easier to manipulate the \hind{} if only a small number of articles 
can be merged into one (small~$c$).
The $\ucites{}$ measure is complex
enough to make increasing the \hind{} by one an NP-hard problem even
if the compatibility graph~$G$ is a clique. In contrast, for the $\scites{}$ measure and the compatibility graph being a clique, it can be decided in
polynomial time whether the \hind{} can be increased by one~\citep{KA13}.

\looseness=-1 We implemented the manipulation algorithms exploiting
small~$k$ and small~$c$.  Experimental results show that all of our
sample AI~authors can increase their \hind{} by only three merges but
that usually merging articles with highly dissimilar titles is
required to obtain a substantial improvement.

\begin{table}\small
  \caption{The complexity of \hindM{} for the citation measures $\scites$, $\ucites$, $\mcites$, and the parameters ``\hind{}~$h$ to achieve'', ``size~$c$ of the largest connected component of the compatibility graph~$G$'', and ``number~$k$ of allowed article merges''. The last row shows the complexity of the variant where we only aim to improve the \hind{} compared to the profile without merges.}
  \centering
  \begin{tabular}{p{0.09cm}p{7.5cm}|p{1.5cm}|p{5.5cm}}
    \toprule
    & $\scites$ & $\ucites$ & $\mcites$\\
    \midrule
    \hspace{-0.1cm}$h$&\multicolumn{3}{c}{%
      W[1]-hard (\cref{thm:hthm}), but linear-time for constant~\(h\) if $G$~is a clique (\cref{lem:kernel-h})}\\
    \midrule
    \hspace{-0.1cm}$c$& \multicolumn{2}{r|}{%
      Solvable in $O(3^c\cdot(n+m))$ time (\cref{lem:small-comp-fpt})}& %
    NP-hard even for~$c=2$ (\cref{qhard})\\
    \midrule
    \hspace{-0.1cm}$k$&%
                        \hfill{}W[1]-hard (\cref{lem:mergesharda}), but solvable in $O(9^kk\cdot (k+n+m))$ time if $G$ is a clique (\cref{lem:mergestrac})&\multicolumn{2}{p{6cm}}{W[1]-hard even if $G$~is a clique (\cref{lem:mergeshardb})}\\
    \midrule[1pt]
    &Improving \hind{} by one is NP-hard (\cref{lem:mergesharda}), but polynomial-time solvable if $G$~is a clique~\citep{KA13}&\multicolumn{2}{p{7.5cm}}{Improving \hind{} by one is NP-hard even if $G$~is a clique (\cref{thm:improvementhard})}\\
    \bottomrule
  \end{tabular}
  \label{tab:res}
\end{table}

\subsection{Preliminaries}
\noindent
We analyze \hindM{} with respect to its
classic and its parameterized complexity. The aim of parameterized complexity
theory is to analyze problem difficulty not only in terms of the
input size, but also with respect to an additional parameter, typically an
integer~$p$ \citep{CFK+15,DF13,FG06,Nie06}. Thus, formally, an instance of a
parameterized problem is a pair~$(I,p)$ consisting of the input~$I$
and the parameter~$p$. A~parameterized problem with parameter~$p$ is
\emph{\fp{} tractable (FPT)} if there is an algorithm that decides
an instance~$(I, p)$ in $f(p) \cdot |I|^{O(1)}$ time, where $f$ is an
arbitrary computable function depending only on~$p$. %
Clearly, if the problem is NP-hard, then we expect $f$~to grow
superpolynomially. %

There are parameterized problems for which there is good evidence that
they are not \fp{} tractable: Analogously to the concept of
NP-hardness, the concept of W[1]-hardness was developed. It is widely
assumed that a W[1]-hard parameterized problem cannot be \fp{} tractable. %
To show that a parameterized problem with parameter~$p'$ is W[1]-hard, a
\emph{parameterized reduction} from a known W[1]-hard parameterized problem with parameter~$p$ can be
used. This is a reduction that runs in~$f(p)\cdot |I|^{O(1)}$ time and produces instances such that the parameter~$p'$ is upper-bounded by
some function~$g(p)$. %
Determining whether an undirected graph~$G$ has a clique of order~$p$ is well known to be W[1]-hard with respect to~$p$.

The notion of a \emph{problem kernel} tries to capture the existence
of efficient and provably effective preprocessing rules \citep{GN07,Kra14}. More
precisely, we say that a parameterized problem has a problem kernel if
every instance can be reduced in polynomial time to an equivalent
instance whose size depends only on the parameter. The algorithm
computing the problem kernel is called \emph{kernelization} and is
often presented as a series of data reduction rules. A data reduction
rule transforms an instance~$(I,p)$ of a parameterized problem into
an~instance~$(I',p')$ of the same problem; a data reduction rule is
\emph{correct} if~$(I,p)$ is a yes-instance if and only if $(I',p')$~is a yes-instance.

\section{Compatibility graphs with small connected components}\label{sec:parc}
\looseness=-1\noindent
In this section, we analyze the parameterized complexity of \hindM{}
parameterized by the size~$c$ of the largest connected component of
the compatibility graph.  This parameterization is motivated by the
fact that one would merge only similar articles and that usually each article is similar to only few other articles.

The following theorem shows that \hindM{} is solvable in linear time
for the citation measures $\scites$ and $\ucites$ if $c$~is
constant. The algorithm exploits that, for these two measures, merging
articles does not affect other articles. Thus, %
we can solve
each connected component independently of the others.

\begin{theorem}{}\label{lem:small-comp-fpt}
  \hindM{}$(\mu)$ is solvable in $O(3^c\cdot(n+m))$~time for
  $\mu\in\{\scites,\allowbreak\ucites\}$ if the connected components
  of the compatibility graph~$G$ have size at most~$c$.
\end{theorem}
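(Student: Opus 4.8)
The plan is to solve each connected component of $G$ separately, which is justified because both $\scites$ and $\ucites$ are "local" measures: the citation count $\mu(P)$ of a merged article $P$ depends only on the atomic articles inside $P$ (for $\scites$, the sum of their indegrees; for $\ucites$, the size of the union of their in-neighborhoods), and not on how other parts are formed. Hence, if $C_1,\dots,C_t$ are the connected components of $G$, any complying partition of $W$ decomposes into complying partitions of $W\cap C_1,\dots,W\cap C_t$, and the multiset of achievable citation counts is the disjoint union of what is achievable within each component. So I would first, for each component $C_i$, enumerate all ways to partition the at most $c$ articles of $W\cap C_i$ into cliques of $G[C_i]$; there are at most the $c$-th Bell number $B_c \le c^c \le 3^{c\log c}$ such partitions — but to get the clean $3^c$ bound I would instead argue more carefully below.

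The key combinatorial observation for the running time is this. Fix a component $C_i$ with vertex set of size $c_i \le c$. I want, for every target value $j$, the maximum number of parts $P$ in a single complying partition of $W\cap C_i$ with $\mu(P)\ge j$. Rather than enumerating set partitions, I process the whole instance with a single pass that is organized by the "$3^c$" term as follows: for each component, and each subset $S\subseteq C_i$, precompute $\mu(S)$ (treating $S$ as a potential part) in $O(|S| + \text{arcs incident to }S)$ time — summing over all subsets of all components this is $O(3^c\cdot(n+m))$, since each vertex lies in one component of size $\le c$ and each of its incident arcs is charged to the $2^{c-1}\le 2^c$ subsets containing it, and $\sum_i 2^{c_i}\cdot(\text{stuff in } C_i) \le 2^c(n+m)$; the factor improves to $3^c$ once we also account for the subset-of-subset work in the DP. Then, for each component, run a subset dynamic program over $2^{c_i}$ subsets: $f(S) = $ the best "profile contribution" (e.g. for each threshold $j\le c$, the max number of parts of value $\ge j$) obtainable by partitioning $S$ into $G$-cliques, via $f(S) = \max_{\emptyset\ne T\subseteq S,\ G[T]\text{ a clique}} \big(f(S\setminus T) \text{ combined with the single part } T\big)$. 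The total time for this DP across all components is $\sum_i 3^{c_i}\cdot\mathrm{poly}(c) \le 3^c\cdot\mathrm{poly}(c)\cdot t$, and folding the $\mathrm{poly}(c)$ into the hidden constant (since $c$ is the parameter) and noting $t\le n$ gives the claimed $O(3^c\cdot(n+m))$ bound.

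Finally I would combine the per-component information. Since parts from different components contribute independently, the overall achievable "number of parts with at least $j$ citations" is the sum over components of the per-component optima — but one must be slightly careful: to maximize the $\hind$ we want, for the eventual threshold $h$, to count parts of value $\ge h$, and it suffices to iterate $h$ from $1$ up to $n$ (or to the trivial upper bound on the $\hind$), and for each candidate $h$ check whether $\sum_i g_i(h) \ge h$, where $g_i(h)$ is the maximum, over complying partitions of $W\cap C_i$, of the number of parts with $\mu$-value at least $h$; output the largest feasible $h$. Computing all $g_i(h)$ is exactly what the subset DP above yields (store, for each $S$, the nonincreasing vector indexed by $j$), and the final aggregation is $O(n\cdot t)=O(n^2)$ in the worst case but is easily arranged in $O(n+m)$ with monotonicity bookkeeping, or simply absorbed since $n^2$ is dominated when $m$ is not tiny — more cleanly, one fixes $h$ and does a single linear scan, giving $O(3^c(n+m))$ overall. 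The main obstacle I anticipate is purely the bookkeeping to replace the naive $B_c$ (Bell number) enumeration of partitions by the $3^c$ subset-convolution DP while keeping the dependence on $n+m$ linear — in particular, verifying the charging argument that summing $2^{c_i}$ or $3^{c_i}$ over components against the edges/arcs inside each component telescopes to $2^c(n+m)$ (resp. $3^c(n+m)$) rather than blowing up, and handling articles in $V\setminus W$ (which are never merged and contribute only as tails of citation arcs) correctly in the computation of $\mu(S)$.
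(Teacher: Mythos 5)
Your proposal is correct and follows essentially the same route as the paper's proof: decompose by connected components of $G$ (justified by the locality of $\scites$ and $\ucites$), precompute $\mu(V')$ and clique-ness for every subset $V'$ of each component in $O(2^c\cdot(n+m))$ total time, and then run the subset dynamic program whose $3^c$ bound comes from counting partitions of a component into three sets. The only superfluous part is the discussion of iterating over candidate values of $h$ — since $h$ is part of the input to \hindM{}, one simply fixes it, as you note at the end.
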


\begin{proof}
\looseness=-1  Clearly, articles from different connected components of~$G$
  cannot be together in a part of any partition complying
  with~$G$. Thus, independently for each connected component~$C$ of~$G$,
  we compute a partition of the articles of~$C$ that
  complies with~$G$ and has the maximum number of parts~$P$ with~$\mu(P)\geq h$.

  We first show that this approach is correct and then
  show how to execute it efficiently.  Obviously, if an algorithm
   creates a partition~$\merge$ of the set~$W$ of our own articles
   that complies with~$G$
  and has at least $h$~parts~$P$ with $\mu(P)\geq h$, then we face a
  yes-instance.  Conversely, if the input is a yes-instance, then
  there is a partition~$\merge$ of~$W$ 
  complying with~$G$ and having at least $h$~parts~$P$ with
  $\mu(P)\geq h$.  Consider any connected component $C$ of $G$ and the
  restriction~$\merge_C = \{P \in \merge \mid P \subseteq V(C)\}$
  of~$\merge$ to~$C$, where $V(C)$ is the vertex set of~$C$.  Note that each part in~$\merge$ is either
  contained in $V(C)$ or disjoint from it and, thus, $\merge_C$ is a
  partition of~$V(C)$.  Moreover, merging articles of one connected
  component does not affect the number of citations of articles in
  other connected components with respect to $\scites{}$ or
  $\ucites{}$.  Thus, if we replace the sets of~$\merge_C$ in~$\merge$
  by a partition of~$C$ that has a maximum number of parts~$P$ with
  $\mu(P)\geq h$, then we obtain a partition that still has \hind{} at
  least~$h$.  Thus, our algorithm indeed finds a partition with
  \hind{} at least~$h$.

  We now show how to compute, for each connected component~$C$ of~$G$, a
  partition that maximizes the number of parts with at least
  $h$~citations. In order to achieve a running time of
  $O(3^c\cdot(n+m))$, we employ dynamic programming. First, for
  each connected component~$C$ of~$G$ and every~$V'\subseteq V(C)$, we initialize a table
  \pagebreak[1]
  \[ 
  Q[V']:=
  \begin{cases}
    1&\text{if $G[V']$ is a clique and $\mu(V')\geq h$,}\\
    0&\text{if $G[V']$ is a clique and $\mu(V')<h$, and}\\
    -\infty&\text{otherwise.}
  \end{cases}
  \]
  A table entry~$Q[V']$ thus stores whether merging~$V'$ results in an
  article with at least~$h$ citations.  Obviously, if $G[V']$~is not a
  clique, then $V'$~cannot be a part in any partition complying
  with~$G$.  Therefore, we set~$Q[V']:=-\infty$ in this case.
  All table entries~$Q[V']$ for all vertex
  subsets~$V'$ of all connected components of~$G$ can be computed in $O(2^c \cdot (n+m))$~time.

  Now, for every vertex subset~$V'\subseteq V(C)$ of a connected
  component~$C$, we define~$T[V']$ to be the maximum number of
  parts~$P$ with $\mu(P)\geq h$ in any partition of~$V'$ complying with~\(G\).  Obviously, we have the recurrence relation
  \[
  T[V']=
  \begin{cases}
    0&\text{if $V'=\emptyset$, and}\\
\max_{V''\subsetneq V'}(T[V'\setminus V'']+Q[V''])&\text{otherwise.}
  \end{cases}
  \]
  After computing the table~$Q$, we can compute~$T[V(C)]$ for each connected component~$C$ using dynamic programming: compute \(T[V']\) for each subset~\(V'\subseteq V(C)\) in the order of increasing cardinality.  To this end, for each such subset~\(V'\subseteq V(C)\), we iterate over all subsets~\(V''\subseteq V'\).  Thus, the computation of~\(T\) works in $O(3^cc)$~time since there are at most $3^c$~partitions of~$V(C)$ into $V(C)\setminus(V'\cup V'')$, $V' \setminus V''$, and $V''$.  Herein, the factor~\(c\) accounts for operations with sets of cardinality at most~\(c\).  Thus, the total running time is \(O(2^c\cdot(n+m)+ 3^cc)\subseteq O(3^c\cdot(n+m))\).
\end{proof}

\noindent We have seen that \hindM{} is solvable in
linear time for the citation measures $\scites$ and $\ucites$ if the
compatibility graph has constant-size connected components.  In
contrast, constant-size components of the
compatibility graph do not help %
when the
$\mcites$ measure is used. This we show by a reduction from the NP-hard \tbpoitsat{} problem~\cite{DF09}.

\begin{theorem}\label{qhard}
  \hindM{}$(\mcites)$ is NP-hard even if all of the following conditions hold:
  \begin{enumerate}[i)]
  \item the largest connected component of the compatibility graph has size two,
  \item\label{qhard2} the citation graph is acyclic, and
  \item\label{qhard3} the input instance has \hind{}~\(h-1\).
  \end{enumerate}
\end{theorem}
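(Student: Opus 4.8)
Here is how I would attack \cref{qhard}.

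\medskip
\noindent\emph{Overall approach.} The plan is to reduce from \tbpoitsat{}, assuming (by a standard padding argument) that every variable occurs in exactly three clauses; write~$m$ for the number of clauses, so there are also~$m$ variables. The structural point that makes component size two tractable to exploit is that a partition complying with~$G$ is nothing more than a choice of a set~$S$ of edges of~$G$ to contract. So I let~$G$ be a matching: for each variable~$x$ there is one edge \{p_x, q_x\} with $p_x, q_x \in W$, every other vertex of~$G$ is isolated, and ``$x$ is true'' means ``\{p_x, q_x\} \in S''. The citation graph~$D$ will carry all arcs from auxiliary (non-$W$) citing vertices to the $p_x, q_x$ and from the $p_x, q_x$ to clause articles, hence it is acyclic, giving condition~(ii); and the largest component of~$G$ has size two, giving condition~(i).

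\medskip
\noindent\emph{The gadgets.} For the variable gadget I give $p_x$ one private non-$W$ citer and $q_x$ exactly $h-1$ private non-$W$ citers (all disjoint). Then $\mcites(\{p_x\}) = 1$ and $\mcites(\{q_x\}) = h-1$, but if $x$ is set true the merged part \{p_x, q_x\} has $\mcites$ equal to~$h$; so a variable contributes exactly one extra part of citation count at least~$h$ precisely when it is true. For a clause $C_j = (x \vee y \vee z)$ I add three isolated ``guard'' articles, each with $h-5$ private non-$W$ citers and, in addition, cited by all six of $p_x, q_x, p_y, q_y, p_z, q_z$. Under \mcites{} such a guard has value $(h-5) + (6 - u_j) = h + 1 - u_j$, where $u_j$ is the number of true variables of~$C_j$ (an un-contracted variable edge contributes two citing parts, a contracted one only one); so all three guards of~$C_j$ have citation count at least~$h$ exactly when $u_j \le 1$. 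Finally set $h := 10m/3$ (assume $m \ge 3$ and $3 \mid m$; if $3 \nmid m$ the formula is unsatisfiable and I output a fixed no-instance with the required properties, and small~$m$ is trivial).

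\medskip
\noindent\emph{Correctness and condition (iii).} In the partition given by a set~$S$, the parts of citation count at least~$h$ are exactly: one per true variable (the contracted edge), and three per clause with $u_j \le 1$ (its guards). Writing~$t$ for the number of true variables, this count is $t + 3\,|\{j : u_j \le 1\}|$. Because each true variable lies in exactly three clauses, $t = \tfrac13 \sum_j u_j$, so the count equals $\sum_j f(u_j)$ with $f(\tau) = \tau/3 + 3\,[\tau \le 1]$. Since $f(0) = 3$, $f(1) = 10/3$, $f(2) = 2/3$, $f(3) = 1$, the function~$f$ is maximised \emph{uniquely} at $\tau = 1$ with value $10/3$. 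Hence the number of high-citation parts never exceeds $h = 10m/3$, and equals~$h$ iff $u_j = 1$ for every clause, i.e.\ iff the assignment encoded by~$S$ is a valid $1$-in-$3$ assignment; this gives the equivalence. For condition~(iii): in the empty partition every $u_j$ is~$0$, so only the $3m < h$ guards reach count~$h$, while those $3m$ guards (count $h+1$) together with the $m$ un-contracted articles~$q_x$ (count $h-1$) give at least $4m \ge h-1$ parts of count at least~$h-1$; hence the constructed instance has \hind{} exactly~$h-1$.

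\medskip
\noindent\emph{Where the difficulty lies.} The obstacle this construction is built to overcome is that \mcites{} is essentially monotone: contracting an edge can only lower the citation counts of parts other than the new one, and can raise only the count of the new part. Consequently no single article's citation count can distinguish ``exactly one of my three variables is true'' from ``at least one'' or ``at most one'' -- it is always a monotone function of the number of incident contractions. The device circumvents this by having the variable gadgets contribute a term that \emph{grows} with the number of true variables and the guards contribute a term that \emph{collapses} as soon as a clause acquires two true variables, tuned (via the ratio $10/3$) so that the sum attains its unique maximum exactly at the $1$-in-$3$ profile. What remains is bookkeeping: verifying that no auxiliary vertex, no singleton $p_x$ or $q_x$, and no guard of an over-satisfied clause ever reaches count~$h$, and that the whole construction has size polynomial in~$m$, which holds since $h = O(m)$.
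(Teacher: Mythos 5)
Your reduction is correct in its essentials, but it takes a genuinely different route from the paper's. The paper also reduces from \tbpoitsat{}, but it enforces the 1-in-3 structure \emph{locally}: each variable and each clause gets a gadget of mergeable pairs, each pair is padded to exactly $h$ total citations, and internal arcs between pairs guarantee that merging one pair caps every competing pair of the same gadget at $h-1$ citations (this is where the non-monotone ``citation-destroying'' effect of \mcites{} is exploited). With $h=m+n$ and one required high-citation article per gadget, exactly one pair per gadget must be merged, which directly encodes the assignment. Your construction instead argues \emph{globally}: only variable edges are mergeable, the clause structure lives entirely in non-mergeable guard articles, and the number of parts reaching $h$ citations becomes a separable sum $\sum_j f(u_j)$ whose unique per-clause maximum sits at $u_j=1$; the target $h=10m/3$ is the exact budget. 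This buys a structurally simpler instance (a matching as compatibility graph, no mergeable clause articles) and a clean one-line verification of all three side conditions, at the price of a delicate arithmetic balance between the $+1$ gained per true variable and the $3$ guards lost per over-satisfied clause.

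The one point you must shore up is the source problem. Your accounting identity $t=\tfrac13\sum_j u_j$ \emph{requires} every variable to occur in exactly three clauses, whereas \tbpoitsat{} as used in the paper only bounds occurrences from above; a variable occurring in fewer than three clauses would let you collect its $+1$ while paying less than its share of the $u_j$'s, breaking the bound $\sum_j f(u_j)\le 10m/3$. Padding positive 1-in-3-SAT to exact degree three is not the routine duplication trick that works for ordinary SAT (naively adding clauses with fresh variables creates new low-degree variables), so you should either give the equality-gadget construction explicitly or cite the known NP-hardness of the cubic monotone 1-in-3-SAT variant. The paper's local construction sidesteps this entirely because its target $h=m+n$ and its per-gadget exclusion argument are insensitive to how often a variable occurs.
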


\noindent Regarding \eqref{qhard2}, note that citation graphs are
often acyclic in practice as papers tend to cite only earlier papers.  Thus,
it is important that \cref{qhard} does not require cycles in the
citation graph.

\begin{proof}We prove Theorem~\ref{qhard} using a polynomial-time many-one reduction from the
  NP-hard \tbpoitsat{} problem \cite{DF09}.

  \decprob{\tbpoitsat}%
  {A formula~$\phi$ in 3-conjunctive normal form containing only
    positive literals and with each literal contained in at most three
    clauses.}%
  {Is there a truth assignment to the variables of~$\phi$ that sets exactly one variable per
    clause to ``true''?}%
  Let
  $n$~be the number of variables of~$\phi$ and let $m$~be the number of
  clauses.  If $m+n$~is odd, then we simply duplicate the instance.  If
  $(m+n)/2<18$, then we solve~$\phi$ using brute force and output a corresponding
  trivial yes- or no-instance of \hindM{}$(\mcites)$. %
  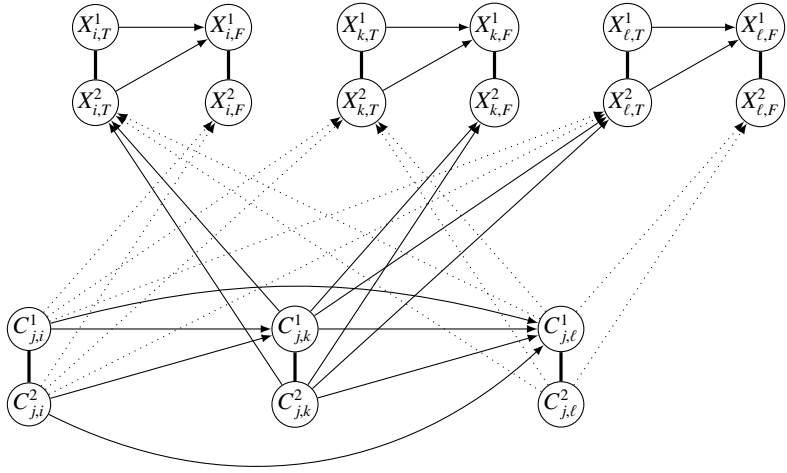
\begin{figure}%
    \centering
    \begin{tikzpicture}[x=1.75cm,y=1cm]
      \tikzstyle{vertex}=[circle,draw,minimum size=4pt,inner sep=0pt,font=\footnotesize]
      \tikzstyle{citation}=[-latex]
      \tikzstyle{mergeable}=[very thick]

      \node[vertex] (x1) at (1,1) {$X_{i,T}^1$};
      \node[vertex] (x2) at (1,0) {$X_{i,T}^2$};
      \node[vertex] (x3) at (2,1) {$X_{i,F}^1$};
      \node[vertex] (x4) at (2,0) {$X_{i,F}^2$};

      \draw[mergeable] (x1)--(x2);
      \draw[mergeable] (x3)--(x4);

      \draw[citation] (x1) --(x3);
      \draw[citation] (x2) --(x3);

      \node[vertex] (y1) at (3,1) {$X_{k,T}^1$};
      \node[vertex] (y2) at (3,0) {$X_{k,T}^2$};
      \node[vertex] (y3) at (4,1) {$X_{k,F}^1$};
      \node[vertex] (y4) at (4,0) {$X_{k,F}^2$};

      \draw[mergeable] (y1)--(y2);
      \draw[mergeable] (y3)--(y4);

      \draw[citation] (y1) --(y3);
      \draw[citation] (y2) --(y3);

      \node[vertex] (z1) at (5,1) {$X_{\ell,T}^1$};
      \node[vertex] (z2) at (5,0) {$X_{\ell,T}^2$};
      \node[vertex] (z3) at (6,1) {$X_{\ell,F}^1$};
      \node[vertex] (z4) at (6,0) {$X_{\ell,F}^2$};

      \draw[mergeable] (z1)--(z2);
      \draw[mergeable] (z3)--(z4);

      \draw[citation] (z1) --(z3);
      \draw[citation] (z2) --(z3);

      \node[vertex] (c1) at (0.5,-3) {$C_{j,i}^1$};
      \node[vertex] (c2) at (0.5,-4) {$C_{j,i}^2$};
      \node[vertex] (c3) at (2.5,-3) {$C_{j,k}^1$};
      \node[vertex] (c4) at (2.5,-4) {$C_{j,k}^2$};
      \node[vertex] (c5) at (4.5,-3) {$C_{j,\ell}^1$};
      \node[vertex] (c6) at (4.5,-4) {$C_{j,\ell}^2$};

      \draw[mergeable] (c1)--(c2);
      \draw[mergeable] (c3)--(c4);
      \draw[mergeable] (c5)--(c6);

      \draw[citation] (c1)--(c3);
      \draw[citation] (c2)--(c3);

      \draw[citation] (c3)--(c5);
      \draw[citation] (c4)--(c5);

      \draw[citation, bend left=15] (c1)to(c5);
      \draw[citation,dotted] (c1)to(y2);
      \draw[citation,dotted] (c2)to(y2);
      \draw[citation,dotted] (c1)to(z2);
      \draw[citation,dotted] (c2)to(z2);

      \draw[citation] (c3)to(x2);
      \draw[citation] (c4)to(x2);
      \draw[citation] (c3)to(z2);
      \draw[citation] (c4)to(z2);

      \draw[citation,dotted] (c5)to(x2);
      \draw[citation,dotted] (c6)to(x2);
      \draw[citation,dotted] (c5)to(y2);
      \draw[citation,dotted] (c6)to(y2);

      \draw[citation, bend right=35] (c2)to(c5);

      \draw[citation,dotted] (c1) to(x4);
      \draw[citation,dotted] (c2) to(x4);

      \draw[citation,dotted] (c5) to(z4);
      \draw[citation,dotted] (c6) to(z4);

      \draw[citation] (c3) to(y4);
      \draw[citation] (c4)to(y4);
    \end{tikzpicture}
    \caption{Construction for a clause~$c_j=(x_i\vee x_k\vee x_\ell)$.
      Undirected bold edges belong to the compatibility graph~$G$,
      directed arcs are citations in the citation graph~$D$.  Only
      vertices in~$W$ and citations between vertices in~$W$ are shown.
      Some arcs are dotted to keep the picture clean.  Observe that
      all citations point from the bottom to the top or from the left
      to the right, and thus form a directed acyclic graph.}
    \label{fig:vargad}
  \end{figure}%
  Otherwise, we now create an instance of \hindM{} with $h:=m+n$.  The
  construction is illustrated in \cref{fig:vargad}.  

  For each
  variable~$x_i$ of~$\phi$, we introduce a variable gadget consisting of
  \begin{itemize}
  \item four articles~$X_{i,T}^1,X_{i,T}^2,X_{i,F}^1$, and
    $X_{i,F}^2$,
  \item two edges~$\{X_{i,T}^1,X_{i,T}^2\}$
    and~$\{X_{i,F}^1,X_{i,F}^2\}$ in the compatibility graph~$G$, and
  \item two arcs~$(X_{i,T}^1,X_{i,F}^1)$
    and~$(X_{i,T}^2,X_{i,F}^1)$ in the citation graph~$D$.
  \end{itemize}
  Merging the pair $\{X_{i,T}^1,X_{i,T}^2\}$ will correspond to
  setting~$x_i$ to true, merging the pair $\{X_{i,F}^1,X_{i,F}^2\}$
  will correspond to setting~$x_i$ to false.  For each
  clause~$c_j=(x_i\vee x_k\vee x_\ell)$, we add a clause gadget
  consisting of
  \begin{itemize}
  \item six articles~$C_{j,z}^1,C_{j,z}^2$ for~$z\in\{i,k,\ell\}$,
  \item three edges~$\{C_{j,z}^1,C_{j,z}^2\}$ for~$z\in\{i,k,\ell\}$ in
    the compatibility graph~$G$, and
  \item six arcs $(C_{j,i}^1,C_{j,k}^1)$, $(C_{j,i}^2,C_{j,k}^1)$,
    $(C_{j,i}^1,C_{j,\ell}^1)$, $(C_{j,i}^2,C_{j,\ell}^1)$,
    $(C_{j,k}^1,C_{j,\ell}^1)$, and $(C_{j,k}^2,C_{j,\ell}^1)$ in the
    citation graph~$D$.
  \end{itemize}
  Merging a pair~$\{C_{j,z}^1,C_{j,z}^2\}$ for $z\in\{i,k,\ell\}$ will
  correspond to setting the literal~$x_z$ of~$c_j$ to true.

  To connect the clause gadget for the clause~$c_j=(x_i\vee x_k\vee
  x_\ell)$ to the corresponding variable gadgets, for all $z\in\{i,k,\ell\}$
  and all~$y\in\{i,k,\ell\}\setminus\{z\}$, we add the arcs
  $(C_{j,z}^1,X_{z,F}^2)$, $(C_{j,z}^2,X_{z,F}^2)$,
  $(C_{j,z}^1,X_{y,T}^2)$, and $(C_{j,z}^2,X_{y,T}^2)$.

  Observe that the constructed citation graph is acyclic since each variable gadget and each clause gadget is acyclic and all other arcs are from clause gadgets to variable gadgets. Moreover, since each variable occurs in at most three clauses of~$\phi$ and each clause has only three variables, every created article has at most $3\cdot 3\cdot 2=18$ incoming citations.  Since $(m+n)/2=h/2\geq 18$, we can finally add, for each of the created articles, a distinct set of articles to~$D$ such that each pair~$\{X_{i,T}^1,X_{i,T}^2\}$, $\{X_{i,F}^1,X_{i,F}^2\}$, or $\{C_{j,z}^1,C_{j,z}^2\}$ is cited exactly $h$~times in total and each single article is cited less than \(h\)~times.  In particular, because articles of the form~\(X_{i,T}^1\) and~\(C_{j,z}^2\) are never cited by any article in variable or clause gadgets, we can add these additional citations so that \(X_{i,T}^1\)~is cited once and \(X_{i,T}^2\)~is cited \(h-1\)~times for each variable~\(x_i\) of~\(\phi\) and so that \(C_{j,z}^2\) in some pair~\(\{C_{j,z}^1,C_{j,z}^2\}\) for a clause~\(c_j\) is cited once, whereas \(C_{j,z}^1\) is cited \(h-1\)~times.   This concludes the construction of our \hindM{}$(\mcites)$ instance.  Note that, for each of the \(h=m+n\)~clauses and variables, we created at least one article with \(h-1\)~citations, which gives an instance with \hind{}~\(h-1\).
  We now prove the correctness of the presented reduction.

  \looseness=-1
  First, if we have an assignment for~$\phi$ that sets exactly one
  variable in each clause to true, then we merge the
  pair~$\{X_{i,T}^1,X_{i,T}^2\}$ for all true variables~$x_i$ and
  merge the pairs~$\{C_{j,i}^1,C_{j,i}^2\}$ for all clauses~$c_j$
  containing~$x_i$.  We will thus get $h=m+n$ articles with
  $h$~citations.
  To show the converse, we first make two important observations:
  \begin{enumerate}
  \item For each variable~$x_i$ of~$\phi$, at most one of the
    pairs~$p_1:=\{X_{i,T}^1,X_{i,T}^2\}$ and
    $p_2:=\{X_{i,F}^1,X_{i,F}^2\}$ can be merged into an
    article~$\mergelt$ with $\mcites(\mergelt)\geq h$: Observe that
    the sum of the citations of each pair is exactly~$h$.  However, if
    both~$p_1$ and~$p_2$ are merged, then the article resulting from
    merging~$p_2$ will get at most~$h-1$ citations: it gets a citation
    from each of the articles of~$p_1$, which will be counted only as
    one citation after merging~$p_1$.
  \item For each clause~$c_j=(x_i\vee x_k\vee x_\ell)$, at most one of
    the pairs~$p_1:=\{C_{j,i}^1,C_{j,i}^2\}$,
    $p_2:=\{C_{j,k}^1,C_{j,k}^2\}$,
    $p_3:=\{C_{j,\ell}^1,C_{j,\ell}^2\}$ can be merged into an
    article~$\mergelt$ with $\mcites(\mergelt)\geq h$: Assume that
    $p_x$ for some~\(x\in\{1,2,3\}\)~is merged into an article~$\mergelt$ with
    $\mcites(\mergelt)\geq h$. Then, no pair $p_y$ with $y > x$ can be
    merged into such an article: $p_y$~can get at most $h-1$~citations
    when merged since $p_y$~gets a citation from each of the two articles
    of~$p_x$.  By the same argument, if any $p_y$ with $y < x$ could be merged into one article~$P$ with $\mcites(\mergelt)\geq h$, then this would contradict the assumption that merging \(p_x\)~yields such an article.
  \end{enumerate}
  Since we ask for merging articles in order to increase the \hind{}
  to $h:=m+n$, which is precisely the number of variables and clauses
  in the input formula, we have to create at least one article with
  $h$~citations for each variable gadget and for each clause gadget.
  That is, if we can achieve \hind{}~$h$, then, for each variable
  gadget and each clause gadget, exactly one pair is merged into an
  article with at least $h$~citations.  Moreover, if, for some
  clause~$c_j=(x_i\vee x_k\vee x_\ell)$ the
  pair~$\{C_{j,z}^1,C_{j,z}^2\}$ is merged for
  some~$z\in\{i,k,\ell\}$, then the pair~$\{X_{z,F}^1,X_{z,F}^2\}$
  cannot be merged into an article with $h$~citations since it gets
  one citation from each of~$C_{j,z}^1$ and~$C_{j,z}^2$.  It follows
  that~$\{X_{z,T}^1,X_{z,T}^2\}$ has to be merged.  Moreover, for
  $y\in\{i,k,\ell\}\setminus\{z\}$, the pair~$\{X_{z,T}^1,X_{z,T}^2\}$
  cannot be merged into an article with $h$~citations, since it gets
  one citation from each of~$C_{j,z}^1$ and~$C_{j,z}^2$.  It follows
  that~$\{X_{z,F}^1,X_{z,F}^2\}$ has to be merged.

  Thus, we obtain an assignment for~$\phi$ that sets exactly one
  variable of each clause to true by setting those variables~$x_i$ to
  true for which the pair~$\{X_{i,T}^1,X_{i,T}^2\}$ is merged into an
  article with at least $h$~citations.
\end{proof}

\section{Merging few articles or increasing the H-Index by one}
\label{sec:park}
\noindent In this section, we consider two variants of \hindM{}:
\hindkM{}, where we allow to merge at most $k$~articles, and \hindI{},
where we ask whether it is possible to increase the \hind{} at all.

\hindkM{} is motivated by the fact that 
insincere authors could try to conceal their
manipulation by merging only few articles. Formally, the problem is defined as follows, where $\mu
\in\{\scites,\allowbreak\ucites,\allowbreak\mcites\}$ as before.
\decprob{\hindkM{}$(\mu)$}
{A citation graph~$D=(V,A)$, a compatibility graph~$G=(V,E)$, a
  set~$W\subseteq V$ of articles, and non-negative integers~$h$ and~$k$.}
{Is there a partition~$\merge$ of $W$ that
  \begin{enumerate}[i)]
  \item complies with~$G$,
  \item has \hind{} at least~$h$ with respect to~$\mu$, and
  \item is such that the
    number~$\sum_{\mergelt\in\merge}(|\mergelt|-1)$ of merges is at
    most $k$?
  \end{enumerate}}

\noindent We show that \hindkM{} parameterized by~$k$ is \fp{} tractable
 only for the $\scites$ measure and when %
 the compatibility graph is a clique. Allowing arbitrary compatibility
 graphs or using more complex measures leads to W[1]-hardness with
 respect to $k$.

The second problem considered in this section, \hindI{}\((\mu)\), %
was introduced by \citet{KA13}; it is formally defined as follows.  \decprob{\hindI{}$(\mu)$}
{A citation graph~$D=(V,A)$, a compatibility graph~$G=(V,E)$, and a
  set~$W\subseteq V$ of articles.}
{Is there a partition~$\merge$ of $W$ that complies with~$G$ and has a
  larger \hind{} with respect to~$\mu$ than the partition of $W$ into singletons?}  

\noindent \cref{qhard}\eqref{qhard3} shows that \hindI($\mcites$) is NP-hard even when the compatibility graph has connected components of size at most two. However, De Keijzer and Apt~\citep{KA13} gave a polynomial-time algorithm for \hindI($\scites$) if the compatibility graph is a clique. We flesh out the boundary between hardness and tractability by proving that more general compatibility graphs lead to NP-hardness in the case of $\scites$ and that even clique compatibility graphs lead to NP-hardness in the case of $\ucites$ and $\mcites$.

\newcommand{\MCC}{\textsc{Multicolored Clique}}

First, we consider the case of general compatibility graphs. Here, we
obtain hardness results for both problem variants by reductions from \MCC{}:

\decprob{\MCC{}}
{An $\ell$-partite undirected graph $H$ along with the $\ell$ partite sets.}
{Is there a clique with $\ell$~vertices contained in~$H$?}

\noindent \MCC{} is known to be NP-hard and W[1]-hard with
respect to~$\ell$ \cite{FDRV09}.  Both hardness results hold
for~$\scites$ and, thus, also for~$\ucites$ and $\mcites$.

\newcommand{\CL}{\textsc{Clique}}
\begin{theorem}{}\label{lem:mergesharda}
 Parameterized by~$k$,   \hindkM{}$(\scites)$ is W[1]-hard. Moreover, \hindI{}$(\scites)$ is NP-hard.
\end{theorem}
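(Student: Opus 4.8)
The plan is to give a single parameterized reduction from \MCC{} that simultaneously establishes both claims: W[1]-hardness of \hindkM{}$(\scites)$ with respect to~\(k\), and (by discarding the bound on~\(k\) and noticing that one more merge would be useless) NP-hardness of \hindI{}$(\scites)$. Given an \(\ell\)-partite graph~\(H = (V_1 \cup \dots \cup V_\ell, E_H)\), I would build a compatibility graph~\(G\) and a citation graph~\(D\) in which ``selecting a vertex'' and ``selecting an edge'' of~\(H\) are encoded as merges of small gadgets, and where reaching the target \hind{} forces a consistent choice of one vertex per part and one edge per pair of parts, i.e.\ a multicolored clique. The parameter~\(k\) will be bounded by a function of~\(\ell\) only---roughly \(k = \ell + \binom{\ell}{2}\), one merge per chosen vertex and one per chosen edge---so this is a valid parameterized reduction.

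Concretely, I would introduce, for each vertex~\(v \in V(H)\), a \emph{vertex gadget}: a small clique in~\(G\) whose articles have citation counts (supplied by private ``padding'' articles from \(V \setminus W\)) tuned so that merging the gadget produces exactly one article reaching the threshold~\(h\), while leaving it unmerged does not. Similarly, for each edge~\(e \in E_H\) an \emph{edge gadget} that, when merged, yields an article with \(\scites\)-value exactly~\(h\). The key design constraint is consistency: an edge gadget for \(e = \{u, w\}\) should only be ``profitable'' to merge if the vertex gadgets of both endpoints \(u\) and \(w\) were merged. Under \(\scites\), merging never decreases anyone's count, so I cannot enforce this by citation loss; instead I would split each gadget's citation budget so that an edge gadget falls short of~\(h\) by exactly the amount contributed by private articles that are \emph{only} added to~\(D\) in a form usable by the edge gadget once the corresponding vertex gadgets are merged---this is the delicate part. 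A cleaner route, which I expect to actually use, is the standard \MCC{} trick: for each part~\(V_i\) force exactly one vertex-selection, and for each pair \((V_i, V_j)\) force exactly one edge-selection, then add ``verification'' gadgets whose thresholds are met only if the selected edge's endpoints coincide with the selected vertices (encoding endpoint indices in citation counts so that the arithmetic works out to \(h\) precisely on agreement). Setting \(h\) equal to the number of gadgets that must reach the threshold makes the \hind{} target tight: every one of those gadgets must individually hit~\(h\).

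For correctness, the forward direction is routine: a multicolored clique \(\{v_1, \dots, v_\ell\}\) tells us which vertex and edge gadgets to merge, using exactly \(\ell + \binom{\ell}{2}\) merges, and a direct count shows each of the designated articles reaches~\(h\) citations, so the \hind{} is at least~\(h\). For the reverse direction, I would argue that with a budget of \(k = \ell + \binom{\ell}{2}\) merges and a \hind{} target of \(h\) chosen as the number of designated gadgets, every merge must be ``spent'' on exactly one gadget, hence exactly one vertex gadget per part and one edge gadget per pair is merged; the verification gadgets' thresholds then force the selected edges to run between selected vertices, yielding a multicolored clique. Finally, for \hindI{}$(\scites)$: drop the merge bound, and observe that in the constructed instance the \hind{} without any merge is \(h - 1\) (each designated article is one citation short, by the same padding as in \cref{qhard}), so a strictly larger \hind{} is achievable iff \hind{}~\(h\) is, iff \(H\) has a multicolored clique; since \MCC{} is NP-hard, so is \hindI{}$(\scites)$.

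**Main obstacle.** The crux is making the \(\scites\) measure enforce consistency between edge-selection and vertex-selection despite \(\scites\) being monotone (merging only ever helps). The fix is to make the profitability of an edge gadget depend not on losing citations but on the target \(h\) being reachable only when the encoded endpoint indices match those of the merged vertex gadgets---achieved by carefully choosing the private citation counts so that the relevant sums equal \(h\) exactly on agreement and fall below \(h\) otherwise. Getting this arithmetic to close, while keeping all per-article citation counts below~\(h\) (so the \hind{} without merging is exactly \(h-1\), as needed for the \hindI{} consequence), is the part requiring real care, analogous to but more intricate than the padding argument in the proof of \cref{qhard}.
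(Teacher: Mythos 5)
There is a genuine gap here, and it is exactly at the point you flag as ``the delicate part.'' Your first proposed mechanism---private padding articles that become ``usable by the edge gadget once the corresponding vertex gadgets are merged''---cannot work for $\scites$ as a matter of principle: $\scites(P)=\sum_{v\in P}\degi_D(v)$ depends only on the part~$P$ itself, never on how the rest of the profile is partitioned, so merging one gadget can have no effect whatsoever on whether another gadget reaches the threshold. The only couplings available under $\scites$ are the compatibility graph, the global merge budget~$k$, and the global count of parts reaching~$h$. Your fallback---verification gadgets with endpoint indices encoded in citation counts so that sums hit $h$ exactly on agreement---is the standard \MCC{} machinery and could in principle be pushed through, but it is precisely the construction and the anti-cheating argument (why no unintended combination of articles sums to~$h$, especially once the merge budget is dropped for the \hindI{} claim) that you leave unspecified. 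As written, the proposal is a plan whose load-bearing step is missing.

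The paper's proof shows that none of this machinery is needed, because the theorem allows an \emph{arbitrary} compatibility graph: take $W_<:=V(H)$ with the compatibility graph on $W_<$ equal to~$H$ itself, give each article in~$W_<$ exactly one citation from a private outside article, and add $\ell-1$ dummy articles each with $\ell$~citations. Then a part $S\subseteq W_<$ has $\scites(S)\ge\ell$ iff $|S|\ge\ell$, and $S$ complies with the compatibility graph iff $H[S]$ is a clique---so the compatibility graph, not the citation arithmetic, enforces the clique structure, and a single reduction with $h=\ell$, $k=\ell-1$ and unmerged \hind{} $\ell-1$ yields both the W[1]-hardness of \hindkM{}$(\scites)$ and the NP-hardness of \hindI{}$(\scites)$ in a few lines (this is \cref{corlem}). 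Your instinct to route everything through citation-count arithmetic is what one is forced into when the compatibility graph is a clique (as in \cref{lem:mergeshardb,thm:improvementhard}, which reduce from \pIS{} for the other measures); for this statement it overcomplicates the problem and leaves the hardest step undone.
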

\noindent The theorem directly follows from the following lemma, from which we will derive another hardness result in the next section.

\begin{lemma}\label{corlem}
  There is a polynomial-time many-one reduction from \MCC{} to
  \hindkM{}$(\scites)$ with \(k=\ell-1\) and \(h=\ell\) and to
\hindI{}.
\end{lemma}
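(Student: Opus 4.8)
The plan is to give one construction that yields both reductions at once. Given an \MCC{} instance consisting of an $\ell$-partite graph $H$ with partite sets $V_1,\dots,V_\ell$, I first assume $\ell\ge 2$ (for $\ell\le 1$ the problem is trivial). For every vertex $v$ of $H$ I introduce an atomic article $a_v\in W$, and I let $a_u$ and $a_v$ be compatible (adjacent in $G$) exactly when $\{u,v\}\in E(H)$; thus $G$ restricted to $\{a_v:v\in V(H)\}$ is a copy of $H$. In addition I create $\ell-1$ dummy articles $d_1,\dots,d_{\ell-1}\in W$ that are pairwise non-adjacent and non-adjacent to every $a_v$ in $G$. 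Finally, using fresh vertices of $V\setminus W$ that are isolated in $G$, I make each $a_v$ receive exactly one citation in $D$ and each $d_i$ receive exactly $\ell$ citations. This gadget has size polynomial in $|H|$. I output $(D,G,W,h,k)$ with $h:=\ell$, $k:=\ell-1$ as the \hindkM{}$(\scites)$ instance, and $(D,G,W)$ as the \hindI{}$(\scites)$ instance.

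Next I would record the structural facts. A partition $\merge$ of $W$ complies with $G$ if and only if every part of size at least two equals $\{a_v:v\in K\}$ for some clique $K$ of $H$; and, under $\scites{}$, such a part has exactly $|K|$ citations (each $a_v$ contributes its single citation), while $\{d_i\}$ has $\ell$ citations and $\{a_v\}$ has one. Hence, if $H$ has a multicolored clique $K=\{v_1,\dots,v_\ell\}$, then the partition merging $\{a_{v_1},\dots,a_{v_\ell}\}$ and leaving everything else as singletons complies with $G$, uses exactly $\ell-1=k$ merges, and exhibits the merged part together with the $\ell-1$ dummies as $\ell$ parts of at least $\ell=h$ citations, so the \hindkM{}$(\scites)$ instance is a yes-instance. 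Moreover the all-singletons partition has \hind{} exactly $\ell-1$: there are exactly $\ell-1$ parts with at least $\ell$ citations (the dummies), giving an upper bound, and these $\ell-1$ parts themselves witness \hind{} $\ge\ell-1$; hence merging the clique strictly increases the \hind{}, so the \hindI{}$(\scites)$ instance is a yes-instance as well.

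For the converse I would use that $H$, being $\ell$-partite, has no clique on more than $\ell$ vertices, so every non-singleton $a_v$-part has at most $\ell$ citations, with equality only when it is the article set of an $\ell$-clique of $H$, which is automatically multicolored. Now take any complying partition with \hind{} at least $\ell$: it has at least $\ell$ parts with at least $\ell$ citations, and since at most $\ell-1$ of these are dummy singletons, at least one is a non-singleton $a_v$-part and therefore corresponds to a multicolored clique of $H$. For the \hindkM{}$(\scites)$ instance this is automatically within budget, since a single part on $\ell$ articles already costs $\ell-1=k$ merges; in particular no certificate can instead split the budget among several smaller $a_v$-parts, as each such part would have fewer than $\ell$ citations and only the $\ell-1$ dummies would remain heavy. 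For \hindI{}$(\scites)$, any complying partition beating the baseline \hind{}~$\ell-1$ has \hind{} at least $\ell$, and the same argument extracts a multicolored clique. Both reductions run in polynomial time, and since $h$ and $k$ depend only on $\ell$, the first is a parameterized reduction for the parameter~$k$; together these prove Theorem~\ref{lem:mergesharda}.

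I do not expect a genuinely hard step. The two points that need care are verifying that the baseline \hind{} of the all-singletons partition is exactly $\ell-1$ (so that the \hindI{} instance is not trivially a yes-instance) and checking that the merge budget $k=\ell-1$ truly forces a single part on an $\ell$-clique rather than permitting the budget to be spread over several smaller, non-heavy cliques.
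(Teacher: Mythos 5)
Your proposal is correct and follows essentially the same route as the paper: your $a_v$-articles and dummies are exactly the paper's $W_<$ and $W_\geq$, with the same citation counts, the same parameters $h=\ell$, $k=\ell-1$, the same baseline \hind{} of $\ell-1$, and the same counting argument (only $\ell-1$ dummies can be heavy, so one merged part must correspond to an $\ell$-clique of the $\ell$-partite graph $H$). No gaps.
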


  \begin{proof}
    The two reductions from the \MCC{} problem differ only in specifying the \hind{} that we want to achieve and the upper bound on the number of merges for \hindkM{}.  The reductions work as follows. %
    We create a citation graph~$D$, a compatibility graph~$G$, and a set of articles~$W$, such that the instance~$(D, G, W, \allowbreak h := \ell, k := \ell-1)$ of \hindkM{} and the instance~$(D, G, W)$ of \hindI{} are yes-instances if and only if $(H,\ell)$~is a yes-instance for \MCC{}.

    Our \hindkM{} and \hindI{} instances have an article
    set~$W=W_\geq\uplus W_<$, where $W_<:=V(H)$ and $W_\geq$~consists
    of $\ell - 1$~new articles. For each article $w \in W_\geq$ we
    introduce a set of $\ell$~articles that are not contained in $W$
    and that cite $w$ and no other article. Similarly, for each
    article~$w \in W_<$ we introduce one article not in~$W$ that
    cites~$w$ and no other article. In this way, we have implicitly
    defined the citation graph~$D$. Next, we construct the
    compatibility graph $G$ from~$H$ by adding each article in~$W_\geq$ as an independent vertex. This concludes the
    construction. Clearly, we can carry it out in polynomial
    time. Note that the reduction is a parameterized
    reduction from \MCC{} parameterized by $\ell$ to \hindkM{} parameterized by~$k$ since~$k=\ell-1$.

    \looseness=-1
    Now we prove the equivalence of the three instances. If the \MCC{}
    instance $(H, \ell)$ is a yes-instance, then there is a clique $S$
    of size $\ell$ in $H$. Merging the corresponding articles $S
    \subseteq W_<$ complies with the compatibility graph and, hence,
    yields a merged article with $\ell$~citations. Together with the
    $\ell - 1$~articles in~$W_\geq$, this results in $\ell$~articles
    with $\ell$~citations and, hence, \hind{} at least $\ell =
    h$. Furthermore, exactly $\ell - 1$~merges are performed in this
    way, implying that the \hindkM{} instance is a yes-instance.

    Note that the \hind{} of the singleton partition $\mathcal{W}$ of
    $W$ is $\ell - 1$. That is, the \hindkM{} instance asks to
    increase the \hind{} of~$\mathcal{W}$ by one. Thus, clearly, if
    the \hindkM{} instance is a yes-instance, then also the \hindI{}
    instance is.

    Finally, assume that the \hindI{} instance is yes. Then there is a
    merged article~$S$ with $\ell$~citations. Since only articles
    in~$W_<$ can be merged, $S$~consists of at least
    $\ell$~articles. Furthermore, $G[S] = H[S]$ is a clique since the
    merging has to comply with~$G$. Hence, the \MCC{} instance is yes,
    concluding the proof that all three instances are equivalent.
  \end{proof}

  \noindent Now we restrict the compatibility graph to be a
  clique, meaning that arbitrary pairs of articles can be merged. Recall that~\hindI{(\scites)} is polynomial-time solvable in
  this case~\cite{KA13}. We also achieve a (fixed-parameter) tractability result 
  for~\hindkM{}$(\scites)$ parameterized by the number~$k$ of article merges.
\begin{theorem}{}\label{lem:mergestrac}
If the compatibility graph~$G$ is a clique, then \hindkM{}$(\scites)$ is solvable in $O(9^kk \cdot (k+n + m))$~time, where $k$~is the number of allowed article merges.
\end{theorem}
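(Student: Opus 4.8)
The plan is to reduce \hindkM{}$(\scites)$ with clique compatibility graph to a combinatorial packing question about the citation counts $c(v):=\degi_D(v)$ of the atomic articles, and to solve that question by a subset dynamic program over only $O(k)$ ``relevant'' articles. Call an article \emph{big} if $c(v)\ge h$ and \emph{small} otherwise, let $a$ be the number of big articles, and recall that under $\scites$ the count of a part is the sum of the counts of its atomic members. First I would normalize solutions: if a big article $v$ lies in a part $P$ with $|P|\ge 2$, split $P$ into $\{v\}$ and $P\setminus\{v\}$ --- the singleton $\{v\}$ has count $\ge h$, so the number of parts of count $\ge h$ does not decrease, while the number of merges drops by one. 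Hence we may assume every big article is a singleton, and a partition then has \hind{} $\ge h$ iff, besides the $a$ big singletons, it has at least $r:=h-a$ further parts of count $\ge h$, each of which must merge at least two small articles. Since each such part costs a merge, we reject immediately if $r>k$ and accept if $r\le 0$ (the all-singleton partition already works). For $1\le r\le k$, because partitioning $|S|$ articles into $j$ parts costs exactly $|S|-j$ merges and dropping a surplus part only shrinks the total size, it remains to decide whether some set $S$ of small articles can be partitioned into exactly $r$ parts of count $\ge h$ each with $|S|\le k+r\le 2k$.

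Next I would cut down the universe. Let $U$ consist of the $k+r$ small articles of largest count (or of all small articles, if there are fewer than $k+r$ of them), so $|U|\le 2k$. If a candidate set $S$ uses some small article $u\notin U$, then since $|S|\le k+r=|U|$ there is an unused $u'\in U$ with $c(u')\ge c(u)$, and swapping $u$ for $u'$ only raises the count of its part; hence we may assume $S\subseteq U$. Computing all in-degrees, determining $a$ and $r$, resolving the base cases, and selecting $U$ together take $O(n+m)$ time.

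Finally I would run a dynamic program over subsets of $U$. Precompute $\sigma(V''):=\sum_{v\in V''}c(v)$ for every $V''\subseteq U$. For $V'\subseteq U$ and $0\le j\le r$, let $M[V',j]$ be the minimum size of a subset of $V'$ that partitions into exactly $j$ parts each of count $\ge h$, or $\infty$ if no such subset exists; then $M[\emptyset,0]=0$, $M[V',0]=\infty$ for $V'\neq\emptyset$, and otherwise
\[
  M[V',j] \;=\; \min_{\emptyset\neq V''\subseteq V',\; \sigma(V'')\ge h}\; \bigl(\,|V''| + M[V'\setminus V'',\, j-1]\,\bigr).
\]
The instance is a yes-instance iff $M[U,r]\le k+r$. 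Evaluating the recurrence inspects each pair $V''\subseteq V'\subseteq U$ once per value of $j\le r$; there are $\sum_{V'\subseteq U}2^{|V'|}=3^{|U|}\le 9^k$ such pairs, and with $O(k)$ work per pair for the set operations and $O(k)$ choices of $j$, the running time is $O(n+m+9^k k^2)$, which lies within the claimed bound $O(9^k k\cdot(k+n+m))$.

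The step I expect to be the main obstacle is the chain of normal-form arguments in the first paragraph: one must argue carefully that big articles never need to be merged, that exactly $r=h-a$ merged parts are both necessary and sufficient, and that the merge budget is faithfully captured by the size bound $|S|\le k+r$ --- each of these is an exchange argument that has to preserve the \hind{} while not increasing the number of merges. Once the problem is in this form, restricting to $2k$ relevant articles and the subset dynamic program are routine.
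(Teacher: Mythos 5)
Your proposal is correct and follows essentially the same route as the paper's proof: an exchange argument showing that merged articles can be assumed to lie among the $O(k)$ highest-cited articles below the threshold~$h$, followed by a $3^{O(k)}$-time subset dynamic program over that reduced universe. The only differences are cosmetic --- you index the table by the number of parts and minimize the total size, whereas the paper indexes by the remaining merge budget and maximizes the number of parts with at least $h$~citations.
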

\begin{proof}
  Assume that $(D, G, W, h, k)$ is a yes-instance and let~$\merge$ be
  a partition of~$W$ with \hind{} at least $h$ and at most $k$
  merges. Let~$M:=\{v\in W\mid v\in P,P\in\merge,|P|\geq 2\}$ be the
  set of articles that have been merged with other articles, and
  let~$W':=\{v\in W \mid \degi_D(v)\leq h\}$ be the set of articles with at
  most~$h$ citations. Let~$v_1, v_2, \ldots$ be the articles of~$W'$
  ordered by non-increasing citation counts. We claim that we may
  assume that~$M = \{v_1, \ldots, v_{|M|}\}$. Otherwise, we are in one
  of the following cases:
  \begin{enumerate}[{Case} 1.]
  \item There is an article~$v\in M$ with more than $h$~citations. That is, $v\in P\in\merge$ and $|P|\geq 2$. In this case, we may simply
    split~$P$ into~$P \setminus \{v\}$ and~$\{v\}$ without dropping
    the \hind{} of $\merge$ below $h$. 

  \item There is an article~$v_i\in M$ with $i>|M|$.  That is, $v_i
    \in P \in \merge$ with $|P|\geq 2$.  Then, since the compatibility graph is a clique, we may replace~$v_i$
    in~$P$ with an arbitrary article $v_j\notin M$ and $j \leq |M|$
    (which clearly exists) without decreasing the \hind{} of~$\merge$.
  \end{enumerate}
  Since at most $k$~article merges are allowed, we have $|M| \leq 2k$.
  Hence, if there is a solution, then there is also one where all
  merged articles are within~$\{v_1, \ldots, v_{2k}\}$.  Thus, we can
  remove all edges from the compatibility graph~$G$ that are incident
  with articles of at least~$h$ citations and discard all
  articles~$v_j$ with $j > 2k$.  In this way, we obtain an instance
  with a compatibility graph that contains at most~$2k$ vertices.  We
  now obtain the claimed fixed-parameter tractability result by
  adapting the dynamic programming algorithm behind
  \cref{lem:small-comp-fpt}.

  Since the only nontrivial connected component~\(C\) of the compatibility graph after the above preprocessing is a clique, we apply the algorithm only to~\(C\).  Thus, the auxiliary table~$Q$,
  used to store whether merging a set~$V'$ of articles creates an
  article with at least~$h$ citations, may ignore the compatibility
  graph. More formally, for all~\(V'\subseteq V(C)\), we let
  \[ 
  Q[V']:=
  \begin{cases}
    1&\text{if $\mu(V')\geq h$,}\\
    0&\text{otherwise.}
  \end{cases}
  \]
  In order to ensure that we make at most \(k\)~merges, we need an
  additional index in the main table~$T$. More precisely, for a set~\(V'\subseteq V(C)\) of vertices,
  let~$T[V',k]$ be the maximum number of parts~$P$ with $\mu(P)\geq h$
  in any partition of~$V'$ that can be obtained from the singleton
  partition by performing at most~$k$ merges.  Then,
  \[
  T[V',k]=
  \begin{cases}
    \enspace 0&\text{if $V'=\emptyset$},\\
    \enspace 0&\text{if $k \leq 0$, and}\\
\enspace\smashoperator{\max\limits_{V''\subsetneq V'}}(T[V'\setminus V'',k-(|V''|-1)]+Q[V''])&\text{otherwise.}
  \end{cases}
  \]
  Since the ground set~\(V(C)\) of articles considered in the dynamic programming table has size at most~$2k$, this algorithm has a running time of~$O(9^kk \cdot (k+n + m))$.
\end{proof}

\noindent For the~\ucites{} and~\mcites{} measure, we obtain hardness results for
both \hindkM{} and \hindI{}; the (parameterized) reductions are from the
\pIS{} problem.%
\decprob{\pIS}
{An undirected graph $H$ and a non-negative integer~$\ell$.}
{Is there an \emph{independent set} of size at least $\ell$ in~$H$, that is, a set of~$\ell$ pairwise nonadjacent vertices?}
\noindent\pIS{} is NP-hard and W[1]-hard with respect to~$\ell$ \cite{DF13}. %
\begin{theorem}\label{lem:mergeshardb}
  For  $\mu \in \{\ucites, \mcites\}$, \hindkM{}$(\mu)$ is W[1]-hard
  parameterized by $k$ even if the
  compatibility graph is a clique.
\end{theorem}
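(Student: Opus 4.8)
The plan is to give a single polynomial-time many-one reduction from \pIS{} that establishes the claim for \emph{both} $\mu=\ucites$ and $\mu=\mcites$ at once. The reason one reduction suffices is a structural property of the instances we build: there will be \emph{no arc between two articles of~$W$}. Consequently, for every part~$P$ of every partition~$\merge$ of~$W$, the ``fusion'' term of~$\mcites$ vanishes (it would need an arc from an article of one $W$-part into~$P$) and $\Ni_{D-W}(w)=\Ni_D(w)$ for every~$w\in W$; hence $\mcites_\merge(P)=\bigl|\bigcup_{w\in P}\Ni_D(w)\bigr|=\ucites(P)$. So it is enough to argue correctness for the $\ucites$~measure.

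Given a \pIS{} instance~$(H,\ell)$ with $n:=|V(H)|$, first dispose of the degenerate cases $\ell\le 1$ and $E(H)=\emptyset$ (trivially solvable; output a canonical yes- or no-instance), so we may assume $\ell\ge 2$ and that $H$ has maximum degree $d\ge 1$. Put $h:=\ell d$ and $k:=\ell-1$, and let $W:=W_<\uplus W_\geq$, where $W_<:=V(H)$ (write~$a_v$ for the article of vertex~$v$) and $W_\geq$ consists of $h-1$ new ``free'' articles. Define the citation graph~$D$ by: every free article is cited by $h$ new articles not in~$W$ that cite nothing else; every~$a_v$ is cited by $d-\deg_H(v)$ new ``private'' articles not in~$W$; and for every edge~$\{u,v\}\in E(H)$ a single new article~$e_{uv}\notin W$ cites exactly~$a_u$ and~$a_v$. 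Thus $\degi_D(a_v)=d$ for all~$v$, every free article has exactly~$h$ citations, and no two articles of~$W$ are joined by an arc. Let~$G$ be the complete graph on~$V$ (so every partition of~$W$ complies), and output $(D,G,W,h,k)$. This takes polynomial time (there are $O(n^2)$ free articles, each with $O(n^2)$ citers), and since $k=\ell-1$ it is a parameterized reduction from \pIS{} parameterized by~$\ell$ to \hindkM{}$(\mu)$ parameterized by~$k$.

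Correctness rests on the identity that, for any~$S\subseteq V(H)$, merging $\{a_v:v\in S\}$ into one part yields $\ucites=|S|\,d-|E(H[S])|$: the $\sum_{v\in S}(d-\deg_H(v))$ private citers and the $\sum_{v\in S}\deg_H(v)-|E(H[S])|$ edge-citers incident to~$S$ are pairwise distinct (no citer cites three articles of~$W$). \emph{Forward direction:} from an independent set~$S$ with $|S|=\ell$, merge $\{a_v:v\in S\}$ (this uses exactly $\ell-1=k$ merges and gives a part with $\ell d-0=h$ citations) and keep everything else as singletons; together with the $h-1$ free singletons this gives $h$ parts of at least~$h$ citations, so the \hindkM{} instance is a yes-instance. \emph{Converse:} let~$\merge$ be a partition with at most $\ell-1$ merges and at least~$h$ parts of citation value $\ge h$; call these the high parts. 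A singleton high part must be a free article, since $\ucites(\{a_v\})=d<\ell d=h$ (using $\ell\ge 2$); let $D_s$ be their number. Let $\mathcal B$ be the high parts of size $\ge 2$, with $|\mathcal B|=r$, so $D_s+r\ge h$. If every part in~$\mathcal B$ contained a free article, picking one free article per part would exhibit $D_s+r$ distinct free articles, contradicting that there are only $h-1$ of them; hence some high part~$P^*$ consists solely of $a_v$-articles. By the identity, $|P^*|\,d\ge|P^*|\,d-|E(H[P^*])|=\ucites(P^*)\ge\ell d$, so $|P^*|\ge\ell$; but then $|P^*|-1\ge\ell-1\ge\sum_{P\in\merge}(|P|-1)\ge|P^*|-1$, forcing $|P^*|=\ell$ and, since $|E(H[P^*])|=|P^*|\,d-\ucites(P^*)\le\ell d-h=0$, forcing $|E(H[P^*])|=0$. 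Thus $\{v:a_v\in P^*\}$ is an independent set of size~$\ell$ in~$H$.

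The step that needs the most care is the converse direction: one must rule out all alternative ways of assembling $h$ high parts within a budget of only $\ell-1$ merges (clumping free articles together, merging a free article with $a_v$-articles, or splitting the merging effort over several parts). The clean way to close this, as sketched, is the double count ``(free articles used as singleton high parts) $+$ (one distinct free article per larger high part) $\le h-1$'', which forces at least one high part to be a pure-$a_v$ part, after which the citation identity plus the tight merge budget pin that part down to an independent set of size exactly~$\ell$. One should also verify the (easy) degenerate cases and note that $h=\ell d$ being large is harmless, since only W[1]-hardness with respect to~$k$ is claimed.
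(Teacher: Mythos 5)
Your reduction is correct and is essentially the paper's: the same article sets $W_<=V(H)$ and $W_\geq$ consisting of $h-1$ saturated articles, the same edge-citer gadget $e_{\{u,v\}}$, the same observation that $\ucites=\mcites$ because there are no arcs inside~$W$, and the same budget $k=\ell-1$; you merely normalize the citation counts of $W_<$ to the maximum degree~$d$ (taking $h=\ell d$) where the paper normalizes them to~$n$ (taking $h=\ell n$). Your converse direction spells out in more detail what the paper states tersely (the explicit double count over free articles that forces a pure-$W_<$ high part), but it is the same underlying argument.
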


  \begin{proof}
    Let~$(H, \ell)$ be an instance of \pIS{}. We construct an
    instance~$(D, G, W, h, k := \ell-1)$ of \hindkM{} that is a
    yes-instance if and only if $(H,\ell)$~is a yes-instance for
    \pIS{}. Clearly, this is a parameterized reduction with respect to
    $\ell$ and $k$. 

    Let $n := |V(H)|$ and $h := \ell n$. Without loss of generality, we assume that $n > \ell > 1$. Our \hindkM{} instance has an article set~$W=W_\geq\uplus W_<$, where $W_<:=V(H)$ and $W_\geq$~consists of $h - 1$~new articles. Next, for each article~$w \in W_\geq$, we introduce $h$~new articles not in $W$ that cite~$w$ and no other article. The citations of the articles in~$W_<$ are defined as follows. For each pair of adjacent vertices~$u, v \in V(H)$, we introduce a new article~$e_{\{u, v\}}$ not contained in~$W$ that cites the articles~$u, v \in W_<$ and no other articles. Furthermore, we increase the citation counts of each article in~$W_<$ to exactly~$n$. That is, for each article~$w \in W_<$ we introduce new articles not contained in~$W$ that cite only~$w$ until~$w$ has $n$~citations. The compatibility graph~$G$ is a clique. This concludes the construction. 

    Clearly, the construction can be carried out in polynomial time.  Moreover, the reduction is a parameterized reduction from \pIS{} parameterized by~$\ell$ to \hindkM{} parameterized by~$k$ since~$k=\ell-1$. Note that no article in~$W$ cites another article in~$W$ and, hence, for any part~$P$ in a partition of~$W$, we have $\ucites(P) = \mcites(P)$. 

    Let us prove the correctness of the reduction. Assume first that~$(H, \ell)$ is a yes-instance and let~$S$ be an independent set of size~$\ell$ in~$H$. Then, merging all articles of~$S$ into one article in the \hindkM{} instance is valid since the compatibility graph~$G$ is a clique. Furthermore, it yields a merged article~$S$ with $\ucites(S) \geq h$~citations: Since the vertices in~$S$ are independent in~$G$, there is no article $e_{\{u, v\}}$ citing both~$u, v \in S$ in the \hindkM{} instance. Thus, the citations of the articles in~$S$ are pairwise disjoint. Together with the $h-1$ atomic articles in $W_\geq$ we have \hind{}~$h$.

\looseness=-1
    Conversely, assume that $(D, G, W, h, \ell - 1)$ is a yes-instance.  Since we are allowed to merge at most~$\ell - 1$ times in order to %
    achieve an \hind{} of~$h = \ell n$ and since each article in~$W_<$ has exactly $n$~citations, we need to merge $\ell$~articles of~$W_<$ into one article. That is, there is a part~$S \subseteq W_<$ in any solution for \hindkM{} with~$\ucites(S)\geq h$ citations. This means that the articles contained in~\(S\) have pairwise disjoint sets of citations because each of them has only $n = h/\ell$~citations. Thus, $S$~is an independent set in~$H$.
  \end{proof}

\noindent The reduction for
\cref{lem:mergeshardb} %
exploits the fact that at most $k$~merges are
allowed. Hence, to show NP-hardness for \hindI{}, we need a different
reduction.  Note that the NP-hardness for the \(\mcites\)~measure and general compatibility graphs already follows from \cref{qhard}\eqref{qhard3}.  We complement this result by the following theorem.

\begin{theorem}[]
  \hindI{}$(\mu)$ is NP-hard for $\mu \in \{\ucites, \mcites\}$ even
  if %
the compatibility graph is a clique.\label{thm:improvementhard}
\end{theorem}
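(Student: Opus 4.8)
\noindent The plan is to prove NP-hardness by a polynomial-time many-one reduction from \pIS{}, in the spirit of the reduction behind \cref{lem:mergeshardb}, but now we may no longer exploit a bound on the number of merges to control the shape of a solution; instead the instance itself has to be engineered so that ``over-merging'' never helps. As in the proof of \cref{lem:mergeshardb}, I would set up the construction so that no article in~$W$ cites another article in~$W$; then $N^{\text{in}}_{D-W}(v)=\Ni_D(v)$ for every $v\in W$ and the additional sum in the definition of~$\mcites$ vanishes, so $\ucites(P)=\mcites(P)$ for every part~$P$ of every partition of~$W$. Hence a single construction covers both measures at once, and it suffices to argue correctness for~$\ucites$.

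\noindent Given an instance $(H,\ell)$ of \pIS{} with $n:=|V(H)|$, the core gadget turns $V(H)$ into a set~$W_<$ of low-citation ``vertex articles'': for each edge~$\{u,v\}\in E(H)$ we add one article (not in~$W$) citing exactly~$a_u$ and~$a_v$, and we pad every~$a_v$ with private citations up to a common count~$\nu$, so that merging a set $S\subseteq W_<$ yields an article with exactly $\nu|S|-|E(H[S])|$ citations; in particular, merging an independent set of size~$\ell$ creates one article with $\nu\ell$ citations, whereas merging any set of size~$\ell$ that is not independent stays strictly below that. On top of this gadget I would add a carefully chosen collection of further articles (with private citations only) whose counts are tuned so that (i)~the \hind{} of the singleton partition is exactly $h-1$ and (ii)~reaching \hind{}~$h$ is impossible unless one can carve out of~$W_<$ the required number of new articles with at least~$\nu\ell$ citations — which, by the formula above, is only possible when~$H$ has an independent set of size~$\ell$. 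The forward direction (independent set $\Rightarrow$ improvement) is then routine: merge the independent set and keep everything else as singletons.

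\noindent The main obstacle, and the reason this does not follow directly from \cref{lem:mergeshardb}, is ruling out shortcuts once arbitrarily many merges are allowed. Because $\ucites(P)$ is monotone under enlarging~$P$ and, crucially, does not depend on the rest of the partition, one cannot forbid over-merging through side effects; so if too many articles already sit above the threshold~$h$ in the singleton partition, improving the \hind{} by one becomes trivially checkable (just merge all remaining articles), and the reduction collapses. The construction must therefore keep the number of already-high articles strictly below~$h-1$, so that improving the \hind{} forces the \emph{simultaneous} creation of several new high-citation articles out of a limited pool of vertex articles, and the accounting has to show that this is possible exactly when the sought independent set exists — any ``sloppier'' family of merges either fails to reach~$\nu\ell$ citations in one of the parts or consumes vertex articles wastefully, lowering the achievable \hind{}. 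Verifying this invariant, together with the routine check that the singleton partition has \hind{} exactly~$h-1$, is the technical heart of the proof; for~$\mcites$ one may alternatively allow citations within~$W$ and let the cascading effect (merging articles that jointly cite a common article lowers that article's count) make over-merging self-penalizing, mirroring the mechanism used in \cref{qhard}.
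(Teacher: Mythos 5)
Your setup (reduction from \pIS{}, no citations inside~$W$ so that $\ucites$ and $\mcites$ coincide on every part, edge articles citing both endpoints, padding each vertex article to a uniform count~$\nu$) matches the paper's, but the mechanism you propose for extracting the independent set does not work, and the step you defer as ``the technical heart'' is exactly where it breaks. You want the merged independent set itself to become the new high-citation article, with threshold~$\nu\ell$, as in \cref{lem:mergeshardb}. Without a merge budget that threshold is trivially reachable: merging a set $S'\subseteq W_<$ of size $\ell+j$ yields $\nu(\ell+j)-|E(H[S'])|$ citations, which is at least $\nu\ell$ whenever $|E(H[S'])|\le j\nu$; in particular, for $\nu=n$, merging \emph{all} of $W_<$ gives $n^2-|E(H)|\ge n\ell$ citations whenever $\ell\le n/2$, regardless of whether $H$ has an independent set of size~$\ell$. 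So if your instance demands only one new high-citation part carved from $W_<$, it is always a yes-instance on that (still NP-hard) range of inputs; if it demands several, you are encoding the existence of several disjoint independent sets, which is no longer \pIS{}. The ``wasteful consumption'' argument cannot rescue a single-part requirement, and your fallback of cascading citation losses \`a la \cref{qhard} does nothing for $\ucites$, where merging never affects other articles.

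The paper escapes this trap by inverting the roles of the independent set and its complement. The threshold is $q=|E(H)|$; the one new part carved out of $W_<$ must be cited by all $q$ edge articles and hence must be a \emph{vertex cover} of~$H$ --- a certificate that is robust under enlargement, so over-merging is harmless there. The independent set is then recovered from the \emph{leftover} vertex articles: besides $q-\ell-1$ articles with $q$ citations, the construction has $\ell$ articles ($W_{-1}$) with exactly $q-1$ citations each, and a counting argument shows that exactly one part meets $W_<$ in a vertex cover~$R$, that no two articles of $W_\ge\cup W_{-1}$ can share a part, and that each article of $W_{-1}$ must therefore absorb a distinct vertex article outside~$R$ to gain its missing citation, forcing $|V(H)\setminus R|\ge\ell$. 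To complete your proof you would need a device of this kind --- a target whose validity survives arbitrary enlargement --- rather than the fragile ``exactly an independent set of size $\ell$'' target, which only a merge budget can protect.
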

\begin{proof}[Proof] %
\looseness=-1  We give a polynomial-time reduction from \pIS{}. %
  Let $(H, \ell)$ be an instance of \pIS{} and let $q :=
  |E(H)|$. Without loss of generality, we assume that $q \geq \ell >
  2$. We now construct an
  instance of \hindI{} with citation graph $D$, a set~$V$ of articles,
  and a subset~$W \subseteq V$ of own articles. The compatibility graph $G$ will be a
  clique on all articles. We introduce citations so that the \hind{}
  of the singleton partition of~$W$ will be~$q - 1$, hence the goal in
  the constructed instance will be to achieve \hind{} at least~$q$.

  The article set~$W$ is partitioned into three parts~$W = W_\geq \uplus\allowbreak W_{-1} \uplus W_<$. The first part, $W_\geq$, consists of $q - \ell - 1$ articles, and for each article~$w \in W_\geq$ we introduce $q$ articles not in~$W$ that cite~$w$ and no other article. The second part, $W_{-1}$, consists of $\ell$ articles, and for each article $w \in W_{-1}$ we introduce $q - 1$~articles not in~$W$ that cite~$w$ and no other article. The last part, $W_<$, contains the vertices of the \pIS{} instance, that is,~$W_<:=V(H)$. Finally, for each edge~$\{u, v\} \in E(H)$ we introduce one article $e_{\{u, v\}}$ not in $W$ that cites both $u$ and $v$. This concludes the construction of the citation graph~$D$. Note that the singleton partition of~$W$ has \hind{} $q - 1$. Hence, we have created an instance~$(D, G, W)$ of \hindI{} where we are looking to increase the \hind{} to at least~$q$. Clearly, we can carry out this construction in polynomial time. Furthermore, since there are no self-citations, that is, no articles in $W$ cite each other, for any subset~$P$ of~$W$ we have $\ucites(P) = \mcites(P)$.  Let us now prove the equivalence of the two instances.
    
    Assume that $(H, \ell)$ is a yes-instance. We claim that then also
    the \hindI{} instance is a yes-instance. Choose an independent
    set~$S$ of size~$\ell$ in~$H$. Take each of the corresponding
    articles in~$S$ and merge them with the articles in~$W_{-1}$,
    pairing them one by one. This creates $\ell$~merged articles with
    $q$~citations each. Together with the articles in~$W_\geq$, we now
    have~$q - 1$ articles with $q$~citations, some of them merged. To
    create another article with $q$~citations, simply merge all
    articles in~$W_< \setminus S$ into one article: Since~$S$ is an
    independent set, for each article~$e_{\{u, v\}}$ citing~$W_<$,
    either~$u$ or~$v$ is not in~$S$. Hence, the merged article~$W_<
    \setminus S$ has $q$~citations. Thus, $(D,G,W)$ is a yes-instance.
    
    Now assume that~$(D, G, W)$ is a yes-instance and let us show that
    also $(H, \ell)$ is. Take a partition $\merge$ of $W$ with \hind{}
    at least $q$. Note that any subset~$R \subseteq W_<$ has~$\mu(R)
    \geq q$ only if $R$ is a vertex cover of~$H$ (a vertex cover of a
    graph is a subset~$X$ of the vertices such that each edge is
    incident with some vertex in~$X$). Hence, as there are at most~$q
    - 1$ parts~$P \in \merge$ with~$P \not\subseteq W_<$, there is at
    least one part~$P \in \merge$ such that~$P \cap W_<$ is a
    vertex cover of~$H$. For the sake of contradiction, assume that there are two parts~$P_1,
    P_2$ such that $P_1 \cap W_<$ and $P_1 \cap W_<$ are vertex covers for~$H$. Then~$P_1 \cup P_2 \supseteq V(H) = W_<$. Furthermore, each remaining part
    of~$\merge$ contains only articles in~$W_\geq \cup W_{-1}$, that
    is, out of these parts, at most~$q - \ell - 1 + \lfloor\ell/2\rfloor$
    can have at least~$q$ citations. However, as~$\ell > 2$, there are
    at most~$q - \lceil\ell/2\rceil - 1 + 2 \leq q - 1$ parts with at
    least~$q$ citations in~$\merge$, a contradiction. Thus, there is exactly one
    part~$P \in \merge$ such that~$R \coloneqq P \cap W_<$ is a vertex cover of~$H$.
    
    Take $S := V(H) \setminus R$. Note that, since~$R$ is a vertex
    cover of~$H$, $S$~is an independent set in~$H$; we claim that~$S$
    has size at least~$\ell$. Since there is exactly one part
    in~$\merge$ that contains a vertex cover of~$H$, each remaining part has at least $q$~citations and there are at
    least~$q - 1$ of them. This means that no two articles in~$W_\geq
    \cup W_{-1}$ are merged. Hence, each article in~$W_{-1}$ is merged
    into an article in~$S$, that is, $S$~contains at least~$\ell$
    articles.
\end{proof}

\section{Achieving a moderately large \hind{}}\label{sec:parh}
\noindent
We now consider the \hind{} that we want to achieve as a parameter. This
parameter is often not very large as researchers in the early stage
of their career have an \hind{} below 20. Even for more experienced
researchers the \hind{} seldom exceeds 50. Hence, in many cases, the
value of a desired \hind{} is sufficiently low to serve as useful
parameter in terms of gaining efficient \fp{}
algorithms.  However,
note that \cref{corlem} immediately implies that \hindM{}$(\mu)$ is W[1]-hard with respect to the target \hind~$h$. This hardness also transfers to the $\ucites$ and $\mcites$ measures:  
\begin{corollary}\label{thm:hthm}
  \hindM{}$(\mu)$ is W[1]-hard with respect to the target \hind{}~\(h\) for each $\mu \in \{\scites, \ucites,\mcites\}$.
\end{corollary}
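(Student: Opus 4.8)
The plan is to reuse the reduction from \MCC{} given in the proof of \cref{corlem} almost verbatim, noting first that it already proves the statement for~$\scites$ and then that it transfers to~$\ucites$ and~$\mcites$ for free because the constructed instance contains no citations among the author's own articles. Concretely, I would apply that construction to an \MCC{} instance~$(H,\ell)$, obtaining a citation graph~$D$, a compatibility graph~$G$, and an article set~$W=W_\geq\uplus W_<$, and then read it as an instance~$(D,G,W,h:=\ell)$ of \hindM{}$(\scites)$ rather than of \hindkM{} or \hindI{}.

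For~$\scites$, the forward direction is identical to that in \cref{corlem}: a clique of size~$\ell$ in~$H$ is merged into a part with~$\ell$ citations, which together with the $\ell-1$ singleton parts coming from~$W_\geq$ (each with exactly~$\ell$ citations) yields \hind{}~$\ell=h$. For the converse I need to check that removing the bound on the number of merges does not create spurious yes-instances. The articles of~$W_\geq$ are isolated in~$G$, hence they stay singletons in every complying partition, so a partition of \hind{}~$\geq\ell$ must contain a further part~$P$ with~$\scites(P)\geq\ell$ that is not one of those singletons. Since every article of~$W_<$ has exactly one citation and is non-adjacent in~$G$ to every article of~$W_\geq$, compliance with~$G$ forces~$P\subseteq W_<$ with~$|P|=\scites(P)\geq\ell$ and~$G[P]=H[P]$ a clique; thus~$H$ has a clique of size~$\geq\ell$, i.e.~$(H,\ell)$ is a yes-instance. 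As~$h=\ell$ is bounded by the \MCC{} parameter and \MCC{} is W[1]-hard with respect to~$\ell$, this is a parameterized reduction establishing W[1]-hardness of \hindM{}$(\scites)$ with respect to~$h$.

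Finally, I would observe that in this construction no article of~$W$ is cited by another article of~$W$, and each article of~$V\setminus W$ cites exactly one article of~$W$. Hence, for every part~$P\subseteq W$ the sets~$\Ni_D(v)$ with~$v\in P$ lie in~$V\setminus W$ and are pairwise disjoint, so~$\scites(P)=\sum_{v\in P}\degi_D(v)=\bigl|\bigcup_{v\in P}\Ni_D(v)\bigr|=\ucites(P)$; moreover~$\Ni_{D-W}(v)=\Ni_D(v)$ and the cross-part term of~$\mcites$ vanishes, so~$\mcites_\merge(P)=\ucites(P)$ for every partition~$\merge$. Therefore the same instance with the same target~$h=\ell$ is a yes-instance of \hindM{}$(\ucites)$ and of \hindM{}$(\mcites)$ exactly when~$(H,\ell)$ is a yes-instance of \MCC{}, which yields W[1]-hardness with respect to~$h$ for all three measures. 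The only steps needing genuine care are the backward direction for~$\scites$ once the merge bound is dropped and the verification that the three citation measures coincide on the constructed instance; both are routine given the structure of the reduction of \cref{corlem}.
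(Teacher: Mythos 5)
Your proposal is correct and follows essentially the same route as the paper, which derives the corollary directly from the \MCC{} reduction of \cref{corlem}; your backward direction without the merge bound is exactly what the paper obtains via the equivalence with the \hindI{} instance there, and your explicit verification that $\scites$, $\ucites$, and $\mcites$ coincide on the constructed instance is the (unstated) reason the paper can claim the hardness "transfers" to the other two measures. You have merely made explicit two details the paper leaves implicit.
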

In contrast, we now show that \hindM{}($\mu$) is \fp{} tractable for any citation measure~\(\mu\in\{\scites,\ucites,\mcites\}\) if the compatibility graph is a clique. %
To this end, we describe a kernelization algorithm, that is, a polynomial-time data reduction algorithm that produces an equivalent instance whose size is upper-bounded by some function of the parameter~$h$. %

\looseness=-1
One particular difficulty in designing data reduction rules for the \(\mcites\) measure is that citations in~$D[W]$ are somewhat fragile as they may be ``destroyed'', for example, if two adjacent vertices in~$W$ are merged.
Thus, we take the following route to obtain the problem kernel. First, in \(O(n+m)\)~time, we use a greedy strategy to compute a maximal matching in the undirected graph underlying the citation graph~\(D\).  If this matching has size at least \(h^2\), then we show that there is a partition achieving \hind{}~$h$.  Otherwise, we use the fact that the articles that do not participate in the matching do not cite each other to design further data reduction rules.

\begin{rrule}\label{rule:matching}
  Let~$(D,G,W,h)$ be an instance of \hindM{}$(\mu)$ for \(\mu\in\{\scites,\ucites,\allowbreak\mcites\}\) such that~$G$
  is a clique. Compute a maximal matching in~$D$ by iteratively
  putting an arc into the matching as long as possible. If the
  resulting matching has size at least~$h^2$, then accept.
\end{rrule}

\begin{lemma}
  \cref{rule:matching} is correct and an application can be performed in $O(n + m)$~time.
\end{lemma}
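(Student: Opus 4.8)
The statement has two halves and I would handle them separately. The running-time half is routine: ``iteratively putting an arc into the matching as long as possible'' is exactly the greedy maximal matching, which I would compute with a single Boolean array marking which vertices are already covered---make one pass over the arcs of~$D$, add an arc to the matching~$M$ whenever both its endpoints lie in~$W$ and are still uncovered, and mark them covered---so the whole application, including the final comparison of~$|M|$ with~$h^2$, runs in $O(n+m)$~time. The substantial half is correctness. Since \cref{rule:matching} only ever accepts and otherwise leaves the instance untouched, correctness reduces to the implication: if the computed matching~$M$ among articles of~$W$ has $|M|\ge h^2$, then $(D,G,W,h)$ is a yes-instance of \hindM{}$(\mu)$. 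Because $G$ is a clique, \emph{every} partition of~$W$ complies with~$G$, so it suffices to exhibit one partition of~$W$ with \hind{} at least~$h$ with respect to~$\mu$, and I would build this partition directly from~$M$.

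Fix $h^2$ of the matched arcs and write them as $(u_1,v_1),\dots,(u_{h^2},v_{h^2})$. Since~$M$ is a matching, its $2h^2$~endpoints are pairwise distinct; in particular the heads $v_1,\dots,v_{h^2}$ are distinct articles in~$W$, the tails $u_1,\dots,u_{h^2}$ are distinct articles in~$W$, and no head coincides with a tail. I would split $\{1,\dots,h^2\}$ into $h$~blocks $B_1,\dots,B_h$ of size~$h$ and take the partition~$\merge$ of~$W$ whose only non-singleton parts are $\mergelt_\ell:=\{v_i\mid i\in B_\ell\}$ for $\ell=1,\dots,h$, keeping every remaining article of~$W$---and in particular every tail~$u_i$---as a singleton. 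It then remains to check that each~$\mergelt_\ell$ has at least~$h$ citations under~$\mu$, which produces $h$~parts with at least~$h$ citations each and hence \hind{} at least~$h$.

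For~$\scites$ this is immediate, as $\scites(\mergelt_\ell)=\sum_{i\in B_\ell}\degi_D(v_i)\ge h$ because each~$v_i$ is cited by~$u_i$; and for~$\ucites$ the $h$~distinct tails $\{u_i\mid i\in B_\ell\}$ are distinct citers of~$\mergelt_\ell$, so $\ucites(\mergelt_\ell)\ge h$. The delicate case---the one I expect to be the main obstacle---is~$\mcites$, because now the tails lie in~$W$ and therefore contribute nothing to the external-citer term $\bigl|\bigcup_{v\in\mergelt_\ell}\Ni_{D-W}(v)\bigr|$, so the count must instead go through the ``at most one citation per other part'' term. The key observation making this work is that, by construction, each tail~$u_i$ forms a singleton part of~$\merge$ that differs from every~$\mergelt_{\ell'}$ (it is neither a head nor merged with anything), and for $i\in B_\ell$ the part~$\{u_i\}$ cites~$\mergelt_\ell$ along the arc~$(u_i,v_i)$; as the~$u_i$ with $i\in B_\ell$ are pairwise distinct, this exhibits $h$~distinct parts citing~$\mergelt_\ell$, whence $\mcites_\merge(\mergelt_\ell)\ge h$. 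This is exactly the place where the fragility of the citations in~$D[W]$ mentioned before the rule could otherwise bite---a merge could destroy such an arc---so it is essential that no tail is ever merged. Combining the three measures shows that~$\merge$ has \hind{} at least~$h$, and hence that the rule is correct.
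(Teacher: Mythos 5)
Your proof is correct and follows essentially the same route as the paper's: partition the $h^2$ matched heads into $h$ blocks of size~$h$, leave everything else (in particular every tail) as a singleton, and use the $h$ pairwise distinct, unmerged tails of each block to certify at least $h$~citations under each of the three measures, with the same care for \mcites{} that each tail forms its own citing part. One small deviation: you compute the matching only over arcs with \emph{both} endpoints in~$W$, whereas \cref{rule:matching} takes a maximal matching in all of~$D$; this does not harm soundness---a tail outside~$W$ only makes each case easier, since it contributes to $\Ni_{D-W}$ of its head---but it needlessly weakens the rule, and the later kernelization rules need the vertex cover coming from a matching that is maximal in all of~$D$.
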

\begin{proof}
  Let $M$ be a matching of size~$h^2$
  in~$D$. Let~$W'$ denote the set of~$h^2$
  vertices that are the heads of the arcs in~$M$. The articles in~$W'$ are
  cited by the tails of the respective arcs
  in~$M$. Thus, we may assume $W'\subseteq W$
  since only citations to vertices in~\(W\) are counted.  Consider a
  partition~$\merge$ of~$W$ that is obtained by partitioning~$W'$
  into exactly~$h$ sets, each of size~$h$, and not merging any other
  articles in~$W$. Since~$M$ is a matching, there are, for each merged
  article~$P\in \merge$, at least $h$~independent arcs from an article
  in~$V\setminus W'$ to an article in~$P$.
  Thus, \(\merge\)~has \hind{}~\(h\) with respect to \(\scites\) and \(\ucites\).
  Moreover, since the articles
  in~$V\setminus W'$ are not merged, there are thus $h$~distinct
  unmerged articles that cite an article of~$P$. Hence, $\merge$~has
  \hind{}~$h$ with respect to~\(\mcites\).

  To see the claim about the running time, observe that it suffices to
  iterate once over all edges, maintaining a label for each vertex
  that indicates whether it has an incident edge in the matching.
\end{proof}
\noindent Now assume that we have computed a maximal matching of size at most~\(h^2\) in~$D$ in \(O(n+m)\)~time. Then, the vertices incident to the matching arcs form a vertex cover~\(C\) of size at most \(2h^2\) for~$D$. It remains to upper bound the number of articles in the independent set~$V \setminus C$.  To this end, we first give a data reduction rule that ensures that each article in~$C$ cites only few articles in~$W \setminus C$. To do this, we need the following lemma, which enables us to assume that a solution merges only few articles in the independent set.

\begin{lemma}\label{lem:few-merges-mcites}
  Let $(D,G,W,h)$ be a yes-instance of \hindM{}$(\mu)$,
  where \(\mu\in\{\scites,\ucites,\allowbreak\mcites\}\),
  such that $G$~is a clique and let $X \subseteq W$ such that
  no article in~$X$ cites any other article in~$X$.
  Then, there is a partition~$\merge$ of~$W$ that has \hind{} at least~$h$
  with respect to~\(\mu\)
  and such that at most~$h^2$ atomic articles from~$X$
  are not singletons in~$\merge$.
\end{lemma}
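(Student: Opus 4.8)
The plan is to peel an arbitrary solution down to the $h$ parts witnessing the \hind{}, and then trim each of those parts until it retains only few articles of~$X$. Let $\merge$ be a partition of~$W$ complying with~$G$ and having \hind{} at least~$h$, and let $P_1,\dots,P_h\in\merge$ be parts with $\mu(P_i)\ge h$. First I would pass to the partition $\merge_0$ consisting of $P_1,\dots,P_h$ and all remaining articles of~$W$ as singletons. For $\mu\in\{\scites,\ucites\}$ this leaves every $\mu(P_i)$ unchanged, since these measures do not depend on the partition. For $\mu=\mcites$, splitting a part into singletons can only \emph{increase} the number of other parts citing a fixed~$P_i$ and leaves the $\Ni_{D-W}$-term untouched, so $\mcites(P_i)$ does not drop. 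Hence $\merge_0$ still witnesses \hind{}~$h$, and now the only non-singleton parts are $P_1,\dots,P_h$.

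Next I would, for each~$i$, replace $P_i$ by a subset $R_i\subseteq P_i$. View $\mu(P_i)\ge h$ as witnessed by a set of at least $h$~\emph{units}, each assigned a \emph{representative} vertex of~$P_i$: for $\scites$ the units are the arcs of~$D$ with head in~$P_i$ (representative: the head); for $\ucites$ they are the articles in $\bigcup_{v\in P_i}\Ni_D(v)$ (representative: one article of~$P_i$ they cite); for $\mcites$ they are the articles in $\bigcup_{v\in P_i}\Ni_{D-W}(v)$ together with the other parts $Q$ of~$\merge_0$ that have an arc to~$P_i$, where for such a $Q$ we fix one arc $(w_Q,v_Q)$ with $w_Q\in Q$, $v_Q\in P_i$ and take $v_Q$ as representative. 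Call a unit \emph{covered} by~$R_i$ if its representative lies in~$R_i$. I would build~$R_i$ greedily: start with $R_i:=P_i\setminus X$, and while fewer than $h$~units are covered, add to~$R_i$ the representative of some currently uncovered unit. Every uncovered unit has its representative in $(P_i\cap X)\setminus R_i$, so each step adds one article of~$X$ and covers at least one new unit; hence after at most~$h$ steps, $h$~units are covered and $|R_i\cap X|\le h$. Taking $\merge':=\{R_1,\dots,R_h\}$ together with all remaining articles of~$W$ as singletons, every non-singleton article of~$X$ lies in $R_1\cup\dots\cup R_h$, so there are at most $h\cdot h=h^2$ of them; and $\merge'$ complies with~$G$ because $G$ is a clique.

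It remains to show $\mu(R_i)\ge h$ in~$\merge'$. For $\scites$ and $\ucites$ this is immediate: a covered unit necessarily contributes to the partition-independent value $\mu(R_i)$, so $\mu(R_i)$ is at least the number of covered units, namely~$h$. The crux, and the step I expect to be the main obstacle, is $\mcites$, where $\mu$ depends on the whole partition and where trimming \emph{all} witnesses simultaneously could a priori destroy counted citations. I would argue that $|\bigcup_{v\in R_i}\Ni_{D-W}(v)|$ is at least the number of covered units of the first kind (each such unit's representative lies in~$R_i$), and that the number of parts of~$\merge'$ citing~$R_i$ is at least the number of covered units of the second kind: if a part $Q\neq P_i$ was selected via its arc $(w_Q,v_Q)$ and $v_Q\in R_i$, then in~$\merge'$ the part containing~$w_Q$---which is $R_j$ if $Q=P_j$ and $w_Q$ survived, and the singleton $\{w_Q\}$ otherwise---has an arc to~$R_i$; and distinct such~$Q$ yield distinct parts of~$\merge'$, since the~$w_Q$ are distinct and two of them lying in a common part of~$\merge'$ could only lie in some $R_j\subseteq P_j$, forcing both corresponding parts of~$\merge_0$ to coincide with~$P_j$. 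Summing the two bounds yields $\mcites_{\merge'}(R_i)\ge h$. The hypothesis that no article of~$X$ cites another article of~$X$ enters here only to keep the situation clean: since articles of~$X$ are pairwise non-citing, separating them out of a part never removes an internal citation, so the net effect of trimming on the surviving witnesses is merely the creation of singletons, which can only help.
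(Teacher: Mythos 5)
Your proof is correct, but it takes a genuinely different route from the paper's. The paper argues by an extremal exchange: it fixes a partition with \hind{} at least~$h$ that minimizes the number of non-singleton articles from~$X$, normalizes it to at most $h$~merged parts, and then derives a contradiction from a part containing more than $h$~articles $a_1,\dots,a_\ell$ of~$X$ by showing that some~$a_{j+1}$ can be split off without dropping the part below $h$~citations; for \mcites{} this needs a delicate comparison of the merged articles citing~$a_{j+1}$ versus those citing the rest of the part, and it is precisely there that the hypothesis that $X$~is internally citation-free is invoked. You instead construct the desired partition directly: normalize to $h$~witness parts plus singletons, decompose each value $\mu(P_i)\geq h$ into citation ``units'' with representatives inside~$P_i$, and greedily retain $P_i\setminus X$ plus at most $h$~representatives so that $h$~units remain covered after the simultaneous trim. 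The key step --- that covered units survive the trim, in particular that distinct citing parts of~$\merge_0$ map injectively to parts of the trimmed partition that are all different from~$R_i$ --- is argued soundly, and the two terms of \mcites{} are bounded separately without double counting. Your approach buys uniformity across the three measures, makes the bound $h\cdot h$ transparent, and, as you note, never actually uses the hypothesis that no article of~$X$ cites another, so it establishes a slightly stronger statement than the lemma; the only presentational omission is that you should state explicitly the representative of a type-(a) unit for \mcites{}, namely an article of~$P_i$ in~$W$ that the external citing article points to.
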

\begin{proof}
  \looseness=-1
  Consider a partition~$\merge^*$ that has \hind{} at least~$h$ 
  with respect to~\(\mu\) and such that the number of atomic
  articles from~$X$ that are not singletons in~$\merge^*$ is minimum.  
  If $\merge^*$~has more than~$h$ merged articles
  or a merged article~\(P\) with \(\mu(P)<h\), then 
  one of the merged articles can be split into its atomic articles
  without decreasing the \hind{} below~$h$.
  Thus, assume that $\merge^*$~contains at most~$h$ merged articles~\(P\),
  each with~\(\mu(P)\geq h\).
  To prove the lemma, it is enough to show that there is no
  merged article~$P^* \in \merge^*$
  containing more than~$h$ atomic articles~$a_1,\ldots, a_{\ell}\in X$,
  where~$\ell > h$.
  We will lead the existence of such an article~\(P^*\) to a contradiction.

  If \(\mu=\scites\), then it is clear that \(P^*\)~can be replaced
  by two articles~\(P^*\setminus\{a_i\},\{a_i\}\) in~\(\merge^*\)
  for some~\(i\in\{1,\dots,\ell\}\) 
  so that still \(\scites(P^*\setminus\{a_i\})\geq h\),
  contradicting our choice of~\(\merge^*\).

  For \(\mu=\ucites\), consider the
  articles~$P_j \coloneqq (P^* \setminus X) \cup \{a_1, \ldots, a_j\}$
  for \(j\in\{1,\dots,\ell\}\).
  If \(\ucites(P_j)<\ucites(P_{j+1})\) for each~\(j\in\{1,\dots,\ell-1\}\),
  then \(\ucites(P_h)\geq h\) and replacing \(P^*\) by \(P_h\) in~\(\merge^*\)
  and adding the articles of \(P^*\setminus P_h\) as singletons
  yields a partition with more singletons from~\(X\) and \hind{} at least~\(h\),
  thus contradicting our choice of~\(\merge^*\).  
  Otherwise, there is a $j \in \{1, \ldots, \ell - 1\}$
  such that $\ucites(P_j) \geq \ucites(P_{j+1})$, meaning that
  \(a_{j+1}\) is cited by a subset of the atomic articles that cite~\(P_j\).
  Then, replacing \(P^*\)~by the two articles~\(P^*\setminus\{a_{j+1}\}\)
  and \(\{a_{j+1}\}\) in~\(\merge^*\) yields a partition with \hind{} at
  least~\(h\) and more singletons from~\(X\),
  contradicting our choice of~\(\merge^*\).

  Similarly, for \(\mu=\mcites\), consider the series of partitions~$\merge_j$
  for $j \in \{1, \ldots, \ell - 1\}$ arising from
  replacing~$P^*$ by the $\ell - j + 1$~(merged)
  articles~$P_j$ and $\{a_{j + 1}\}, \ldots, \{a_{\ell}\}$ in~\(\merge^*\).
  If $\mcites_{\merge_j}(P_j) < \mcites_{\merge_{j+ 1}}(P_{j+1})$
  for each $j \in \{1, \ldots, \ell - 1\}$,
  then $\mcites_{\merge_h}(P_h)\geq h$
  and each remaining merged article~\(P'\in\merge_h\cap\merge^*\)
  has \(\mcites_{\merge_{h}}(P')\geq\mcites_{\merge}(P')\).
  Yet $\merge_h$~has less atomic articles from~$X$ that are not singletons
  than~\(\merge^*\), which contradicts our choice of $\merge^*$.
  If, otherwise, there is a $j \in \{1, \ldots, \ell - 1\}$
  such that $\mcites_{\merge_j}(P_j) \geq \mcites_{\merge_{j+1}}(P_{j+1})$, 
  then we consider two cases.

  \looseness=-1
  In the first case,
  $a_{j + 1}$~does not cite any article in $P^* \setminus X$.
  Since articles in~\(X\) do not cite each other and \(a_{j+1}\in X\),
  it follows that \(a_{j+1}\)~does not cite any article in~\(P^*\).
  Therefore, $\mcites_{\merge_j}(P_j) \geq \mcites_{\merge_{j+1}}(P_{j+1})$
  implies that
  the set of (merged) articles that cite $a_{j + 1}$ is
  a subset of the (merged) articles that cite~$P_{j}$.
  This implies that the set of (merged) articles that cite $a_{j + 1}$ is
  a subset of the (merged) articles that cite~$P_{\ell}$.
  Thus, replacing the merged article~$P^* = P_{\ell}$
  by the two articles~$P^*\setminus \{a_j\}$ and $\{a_j\}$ in~$\merge$,
  we obtain a partition with \hind\ at least~$h$ that
  has one less atomic article from~$X$ that is not a singleton.
  This contradicts our choice of~$\merge^*$.

  In the second case,
  $a_{j + 1}$ cites at least one article in $P^* \setminus X$.
  Let $c$~be the number of (merged) articles outside of~$P_j$ that
  cite~$a_{j + 1}$ and none of the articles in~$P_j$ and
  let $d$~be the number of (merged) articles outside of~$P_j$ that
  cite both~$a_{j+1}$ and an article in~$P_j$.
  We have $c \leq d + 1$ since, otherwise,
  $\mcites_{\merge_j}(P_j) < \mcites_{\merge_{j+1}}(P_{j+1})$.
  Since no article in~\(X\) cites~\(a_{j+1}\) and
  \(P_\ell\setminus P_j\subseteq X\), we also have $c' \leq d' + 1$,
  where $c'$ is the number of (merged) articles outside of~$P_\ell$ that
  cite~$a_{j + 1}$ and none of the articles in~$P_\ell \setminus \{a_{j + 1}\}$ 
  and where $d'$ is the number of (merged) articles outside of $P_\ell$ that 
  cite both~$a_{j + 1}$ and an article in~$P_{\ell} \setminus \{a_{j + 1}\}$.
  Replacing the merged article~$P_{\ell}$
  by the two articles $P_{\ell} \setminus \{a_j\}$ and $\{a_j\}$ in~$\merge^*$,
  we obtain a partition with \hind\ at least~$h$ that has one less
  atomic article from~$X$ that is not a singleton.
  This is a contradiction to the choice of~$\merge^*$.
\end{proof}

\noindent The idea for the following data reduction rule is that, if an article in a vertex cover~$C$ cites many articles outside of~$C$, then only few of these are in merged articles and only few of the remaining articles are needed to maintain the citations of the merged articles. Hence, superfluous citations can be removed.

\begin{rrule}\label{rule:many-cites-mcites}
  Let $(D, G, W, h)$ be an instance of \hindM{}$(\mu)$,
  where \(\mu\in\{\scites,\allowbreak\ucites,\allowbreak\mcites\}\), such that $G$~is a clique and let $C$~be a vertex cover in~$D$. If there is an article~$v \in C$ that cites more than $2h^2 + 2h$~articles in~$W \setminus C$, then remove an arbitrary citation~$(v, w)$ for some $w \in W \setminus C$.
\end{rrule}
\begin{lemma}
  \cref{rule:many-cites-mcites} is correct and can be exhaustively applied in $O(n + m)$ time.
\end{lemma}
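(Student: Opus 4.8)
The plan is to prove the two assertions --- equivalence of the reduced instance and the $O(n+m)$ running time --- in turn. For equivalence, one direction is immediate: deleting an arc from~$D$ does not increase $\mu(\mergelt)$ for any measure $\mu\in\{\scites,\ucites,\mcites\}$ and any part~$\mergelt$, so every partition witnessing \hind{}~$h$ after the deletion also does so before it.

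For the converse, suppose $(D,G,W,h)$ is a yes-instance. Since $C$~is a vertex cover of~$D$, the set $W\setminus C$ is independent in~$D$, so \cref{lem:few-merges-mcites} applies with $X:=W\setminus C$; combined with splitting off superfluous merged articles, it yields a solution~$\merge$ with at most~$h$ merged articles, each of citation count at least~$h$, in which at most~$h^2$ articles of $W\setminus C$ lie in merged articles. Fix $h$~parts of~$\merge$ of citation count at least~$h$ as a witness~$\mathcal D$ and call a part of~$\mathcal D$ \emph{tight} if its citation count equals~$h$. Call an article of $W\setminus C$ \emph{dangerous} if it lies in a part of~$\mathcal D$ or if it is a singleton of~$\merge$ that cites some tight part of~$\mathcal D$. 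Since each tight part of~$\mathcal D$ has at most~$h$ citing parts, at most $h^2+h+h^2=2h^2+h$ articles of $W\setminus C$ are dangerous, strictly fewer than the more than $2h^2+2h$ articles of $W\setminus C$ that~$v$ cites; hence $v$~cites a non-dangerous article $f\in W\setminus C$.

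Now delete the arc $(v,w)$. Because $w\in W\setminus C$, this lowers, for the fixed partition~$\merge$, only the citation count of the part~$\mergelt^*$ containing~$w$, and by at most one. If $\mergelt^*\notin\mathcal D$ or $\mergelt^*$ was not tight, then $\mathcal D$ still witnesses \hind{}~$h$ and we are done. Otherwise $w$~is dangerous (so $w\neq f$), and we repair $\mergelt^*$ by fusing the singleton $\{f\}$ into it, obtaining the enlarged part $\mergelt^{**}:=\mergelt^*\cup\{f\}$. Then $\mu(\mergelt^{**})\geq h$ in the reduced instance --- for~$\scites$ since $\degi_D(f)\geq 1$ replaces the lost citation, for~$\ucites$ since~$v$ re-enters as a citer of~$f$, and for~$\mcites$ by the analogous but more careful accounting, which crucially uses that the non-dangerous~$f$ has its singleton not citing~$\mergelt^*$. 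Moreover, fusing $\{f\}$ into~$\mergelt^*$ can lower the citation count of another part~$\mergelt'\neq\mergelt^{**}$ only under~$\mcites$ and only if both $\{f\}$ and~$\mergelt^*$ cite~$\mergelt'$; since $f$~is non-dangerous, such a~$\mergelt'$ is not tight, so its citation count stays at least~$h$. Hence $(\mathcal D\setminus\{\mergelt^*\})\cup\{\mergelt^{**}\}$ witnesses \hind{}~$h$. The main obstacle, and where I expect most of the work, is making this $\mcites$ accounting precise: pinning down exactly which parts can lose a citation through the deletion and through the fusion, and lining this up with the definition of ``dangerous'' so that the counting --- and hence the threshold $2h^2+2h$ --- indeed leaves a usable spare~$f$.

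For the running time, precompute the indicator vectors of~$C$ and~$W$ in $O(n)$~time, then make one pass over~$A$ to compute, for each $v\in C$, the number of arcs from~$v$ to $W\setminus C$; for each~$v$ above the threshold, delete arbitrary such arcs until exactly $2h^2+2h$ remain. The total number of deletions is at most $|A|\leq m$, deleting arcs leaving~$v$ never increases the count of another vertex, and $C$~stays a vertex cover, so a single pass through the over-threshold vertices suffices --- altogether $O(n+m)$~time.
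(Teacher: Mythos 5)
Your overall strategy matches the paper's in outline---the trivial direction, then invoking \cref{lem:few-merges-mcites} with $X=W\setminus C$ (valid, since $C$ is a vertex cover and so $W\setminus C$ is citation-free internally), and finally locating among the more than $2h^2+2h$ articles that $v$ cites in $W\setminus C$ a ``safe'' one to fuse into the damaged part. Where you genuinely diverge is in how safety is certified for \mcites{}. The paper excludes, besides the at most $h^2$ articles inside merged parts, the at most $h-1$ citers of the damaged part and the at most $h-1$ singletons that already have $h$ citations, and then runs a sequential potential argument over the remaining singletons $u_1,\dots,u_\ell$ to find a ``good'' one whose cited merged articles are already saturated with $h$ citations from earlier $u_j$'s, so that absorbing it cannot drag any other witness part below~$h$. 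You instead exclude every singleton citing \emph{any} tight witness part (at most $h\cdot h$ of them), which lets you conclude in one shot that every part your chosen $f$ cites can absorb the loss of one citation; this avoids the potential-function bookkeeping entirely and spends the $2h^2+2h$ budget differently but still with slack. Your accounting for $\mu(\mergelt^{**})\geq h$ is also essentially the paper's: the deletion decreased $\mu(\mergelt^*)$ only if $v$ (respectively $v$'s part) no longer cites $\mergelt^*$ and $v\notin\mergelt^*$, so $v$'s part is a genuinely new citer of $\mergelt^*\cup\{f\}$, and no old citer is lost because the non-dangerous $f$ does not cite the tight $\mergelt^*$.

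One genuine wrinkle: as written, ``dangerous'' covers only articles lying in a part of $\mathcal D$, so your non-dangerous $f$ could still sit inside a merged part \emph{outside} $\mathcal D$, in which case ``fusing the singleton $\{f\}$'' is not available, and removing $f$ from that part is not obviously harmless. The fix is free: declare dangerous every article of $W\setminus C$ lying in \emph{any} non-singleton part---\cref{lem:few-merges-mcites} already bounds these by $h^2$, which is exactly the first summand of your count---plus the at most $h$ singletons that themselves are parts of $\mathcal D$. With that adjustment, and the \mcites{} details you yourself flag filled in along the lines above, the argument goes through; your running-time analysis coincides with the paper's.
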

\begin{proof}
  We first prove the correctness.
  Clearly, if the instance
  resulting from an application of \cref{rule:many-cites-mcites}
  is a yes-instance,
  then also the original instance is a yes-instance.
  For the converse, consider a partition~$\merge$
  with \hind{}~$h$ for $(D, G, W, h)$
  that does not have \hind~$h$ after removing~$(v, w)$ from~$D$.
  Without loss of generality, we can assume that
  there are no merged articles~$P \in \merge$ with $\mu(P) < h$,
  because unmerging such articles can only
  increase the number of citations of other articles (in the case of \mcites{})
  and, hence, cannot decrease the \hind{} of~$\merge$.

  Observe that, after the deletion of citation~\((v,w)\) from~\(D\),
  there is at most one merged article~$P \in \merge$ such that~$\mu(P) < h$.
  We claim that, among the articles cited by~$v$,
  there is an atomic article with less than \(h\)~citations
  that we can add to~$P$
  so that \(P\)~has $h$~citations again,
  thus yielding a partition with \hind{} at least~\(h\).
  Let~$U$ denote
  the set of more than \(2h^2+2h\)~articles in~$W \setminus C$
  that are cited by~$v$.
  By \cref{lem:few-merges-mcites}, we may assume that
  at most~$h^2$ of the articles in~$U$ are in merged articles.
  Thus, since \(\merge\)~does not have \hind~\(h\),
  there is an article~\(u\in U\) that is a singleton in~\(\merge\)
  and satisfies \(\mu(\{u\})<h\).
  If \(\mu=\scites\), then adding~\(u\) to~\(P\)
  yields a partition with \hind{}~\(h\)
  because \(u\)~has at least one citation (from~\(v\)).
  If \(\mu=\ucites\), then observe that $v$ does not cite~$P$. Hence, adding~\(u\) to~\(P\)
  yields a partition with \hind{}~\(h\)
  since \(v\) cites~\(P\cup\{u\}\).
  It remains to handle the case \(\mu=\mcites\).
  
  Note that having \(\mcites_{\merge}(P)<h\)
  after deleting~\((v,w)\) from~\(D\) means \(v\notin P\).
  Recall that, by \cref{lem:few-merges-mcites}, we may assume that
  at most~$h^2$ of the articles in~$U$ are in merged articles.
  Furthermore, there are at most~$h - 1$ articles in~\(U\) that cite~$P$ and at most $h - 1$ articles~$u \in U$ with~$\mcites_\merge(u) \geq h$.
  Denote the remaining articles of~$U$ by~$u_1, \ldots, u_\ell$. That is, each $u_i$ is cited by~$v$, is a singleton in $\merge$, does not cite~$P$, and has $\mcites_\merge(u_i) < h$.
  Observe that,
  if one of these articles, say~$u_i$, does not cite
  any merged article in~$\merge$,
  then adding~$u_i$ to~$P$ yields a partition with \hind{}~$h$.
  Hence, assume that each article~$u_i$ for $i \in \{1, \ldots, \ell\}$
  cites at least one merged article.
  Observe furthermore that,
  if there is some~$u_i$ such that each merged article~$P' \neq P$
  that is cited by~$u_i$
  receives~$h$ citations from~$u_1, \ldots, u_{i - 1}$,
  then adding~$u_i$ to~$P$ yields a partition with \hind{}~$h$
  (recall that $u_i$ does not cite~$P$).
  Call such an article~$u_i$ \emph{good}. We claim that there is at least one good article. Assign to each $u_i$ the integer~$c_i \coloneqq \sum_{P' \in \merge \setminus \{P\}} \min\{h, \operatorname{cites}(i, P')\}$, where $\operatorname{cites}(i, P')$ is the number of citations of~$P'$ from~$u_1, \ldots, u_{i - 1}$. Observe that each~$u_i$ either cites at least one merged article~$P' \neq P$ that receives less than~$h + 1$ citations from~$u_1, \ldots, u_{i - 1}$ or it is good. Hence, either $c_i > c_{i - 1}$ or $u_i$ is good. Furthermore, if $c_i \geq (h - 1)h$, then $u_i$ is good. Thus, if $\ell > (h - 1)h$, then there is a good article. Because $\ell \geq |U| - h^2 - (h - 1) - (h - 1)$ and $|U| \geq 2h^2 + 2h$, there is a good article indeed.

  Regarding the running time, for each article~\(v\in V\), it can be checked in \(O(\dego_D(v))\)~time whether the rule is applicable. In the same time, all citations exceeding the number \(2h^2 + 2h\)~can be deleted.  Since application of the reduction rule to one article cannot make it applicable to other articles, it follows that it can exhaustively be applied in \(O(\sum_{v\in V}\dego_D(v))=O(n+m)\)~time.
\end{proof}
Finally, we need the following cleanup rule.
\begin{rrule}\label{rule:cleanup-mcites}
  Let $(D, G, W, h)$ be an instance of \hindM{}$(\mu)$,
  where \(\mu\in\{\scites,\allowbreak\ucites,\allowbreak\mcites\}\),
  such that $G$~is a clique.
  \begin{itemize}
  \item If there are $h$ articles in~$W$ with $h$ citations each, then accept.
  \item If there is an article in~$W$ that is not cited, then remove this article.
  \item If there is an article in~$V \setminus W$ that cites no other article, then remove this article.
  \item If there is an article in $W$ that is cited by more than~$h$
    articles in $V \setminus W$, then remove an arbitrary one of
    these citations.
  \end{itemize}
\end{rrule}
\begin{lemma}
  \cref{rule:cleanup-mcites} is correct and can be exhaustively applied in $O(n + m)$~time.
\end{lemma}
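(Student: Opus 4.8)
The plan is to verify that each of the four bullet points in \cref{rule:cleanup-mcites} describes a correct data reduction rule, i.e., that applying it preserves yes- and no-instances, and then to account for the running time. Each bullet is essentially independent, so I would handle them one at a time.

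First I would dispatch the easy bullets. For the first bullet, if there are already $h$ articles in~$W$ with $h$~citations each, then the singleton partition already has \hind{} at least~$h$ regardless of~$\mu$ (note that for \mcites{}, the singleton partition maximizes the citation count of each single article among all partitions), so accepting is correct. For the second bullet, an article~$w\in W$ that is not cited has $\mu(\{w\})=0$ and can never be part of any part~$P$ with $\mu(P)\geq h$ in a useful way; more carefully, given any solution partition~$\merge$, removing~$w$ from its part cannot decrease any $\mu$-value of any other part (for \mcites{}, deleting a non-citing, non-cited vertex changes nothing), so deleting~$w$ and its membership yields a solution for the reduced instance, and conversely any solution for the reduced instance extends to one for the original by adding~$w$ as a singleton. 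For the third bullet, an article in~$V\setminus W$ that cites nothing contributes to no citation count and is never in~$W$, so its removal changes no $\mu(P)$ for $P\subseteq W$; equivalence is immediate.

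The fourth bullet is the one requiring a genuine (if short) argument, and I expect it to be the main, though still modest, obstacle. Suppose $w\in W$ is cited by more than $h$~articles in $V\setminus W$, and we delete one such citation $(x,w)$ with $x\in V\setminus W$. Deleting a citation can only ever hurt, so one direction is trivial. For the other direction, take a solution partition~$\merge$ for the original instance; as usual we may assume $\merge$ has at most $h$~merged articles~$P$, each with $\mu(P)\geq h$, and we only need to argue that the part $P_w\ni w$ still satisfies $\mu(P_w)\geq h$ after deleting $(x,w)$. The point is that $w$ has at least $h$~distinct citers in $V\setminus W$; since articles in $V\setminus W$ are never merged (they are not in $W$), these $h$~citers are $h$~distinct unmerged articles each citing $P_w$. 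Hence already $\ucites(P_w)\geq h$ and $\mcites(P_w)\geq h$ counting only these citers, and deleting one of the more-than-$h$ such citations still leaves at least $h$; for \scites{} the sum only grows with more citations. So $\mu(P_w)\geq h$ is retained, and $\merge$ remains a solution.

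For the running time, each bullet can be checked and applied in $O(n+m)$~time: the first bullet is a single scan over~$W$ counting indegrees; the second and third are scans identifying zero-(in/out)degree vertices and deleting them (deletion of a vertex and its incident arcs is linear in its degree, and a removed vertex cannot trigger new applicability elsewhere, so exhaustive application is still $O(n+m)$); the fourth bullet scans each $w\in W$, counts its citers in $V\setminus W$, and deletes the surplus in time linear in $\degi_D(w)$, and again deleting an excess citation to~$w$ cannot make the rule newly applicable to any other article, so exhaustive application over all articles costs $O(\sum_{w\in W}\degi_D(w))=O(n+m)$. The only care needed is to observe that the four operations do not interfere in a way that blows up the running time — deleting uncited or non-citing articles can create new uncited/non-citing articles, but each vertex is deleted at most once, so the total work remains linear.
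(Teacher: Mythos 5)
Your argument for the fourth bullet---the only one the paper's own proof treats as non-obvious---is correct and coincides with the paper's: the more than $h$~citers of the affected article lie in $V\setminus W$, hence are never merged, so every part containing that article keeps at least $h$~distinct (unmerged) citers after one citation is removed, under all three measures. The running-time accounting is also fine, and somewhat more explicit than the paper's.

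There is, however, a gap in your justification of the second bullet. You assert that ``deleting a non-citing, non-cited vertex changes nothing,'' but the rule's precondition only gives you that the article~$w\in W$ is not cited; it may well cite other articles of~$W$. Deleting~$w$ from the graph then also deletes its outgoing arcs, which lowers $\degi_D$, $\Ni_D$, and the \mcites{}-contribution for the articles that $w$~cites, so your step ``deleting~$w$ and its membership yields a solution for the reduced instance'' does not follow. Concretely, take $h=2$, $W=\{w,v,u_1,u_2\}$, an arc~$(w,v)$, and one distinct citer in $V\setminus W$ for each of $v,u_1,u_2$: the partition $\{v\},\{u_1,u_2\},\{w\}$ has \hind{}~$2$ under every measure (and the first bullet does not fire, since only $v$~has two citations), yet after deleting~$w$ only three citations remain and \hind{}~$2$ becomes unachievable. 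Your argument covers only the case where $w$~is isolated; to close the gap one must either argue separately about $w$'s outgoing arcs or read ``remove'' as removing~$w$ from~$W$ while keeping it (and its citations) in~$V$. To be fair, the paper's proof declares this bullet obvious and does not address the point either, so on the part the paper actually argues you match it---but as written this step of your proof is not sound.
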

\begin{proof}
  It is clear that \cref{rule:cleanup-mcites} can be exhaustively applied in $O(n + m)$~time. For the correctness, the only non-obvious part is the last one. To see that it is correct, let $v$ be an article to which it has been applied, and observe that every merged article that~$v$ can be contained in has at least~$h$ citations before applying the rule, as well as after applying the rule.  
\end{proof}

\noindent Combining all data reduction rules above, we can give the promised polynomial-size problem kernel.

\begin{theorem}\label{thm:kern-fusion}
  If the compatibility graph is a clique, then
  a $(4h^4 + 6h^3 + 5h^2)$-article problem kernel
  for \hindM$(\mu)$
  is computable in \(O(n+m)\)~time,
  where \(\mu\in\{\scites,\allowbreak\ucites,\allowbreak\mcites\}\).
\end{theorem}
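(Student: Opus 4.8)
The plan is to obtain the kernel by exhaustively applying \cref{rule:matching,rule:cleanup-mcites,rule:many-cites-mcites} in a carefully chosen order and then bounding the number of surviving articles by a direct counting argument. Since each of these rules is already known to be correct and to run in~$O(n+m)$ time, only two things remain: fixing an order of application that keeps the \emph{total} running time linear, and showing that an instance to which none of the rules applies has at most~$4h^4+6h^3+5h^2$ articles.

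I would start with the observation that, for every measure~$\mu\in\{\scites,\ucites,\mcites\}$ and every part~$P\subseteq W$, the value~$\mu(P)$ depends only on arcs of~$D$ whose head lies in~$W$; hence we may delete in~$O(n+m)$ time all arcs with head in~$V\setminus W$, so that afterwards every arc of~$D$ points into~$W$. Then apply \cref{rule:matching} once; if it accepts, output a fixed constant-size yes-instance, and otherwise the endpoints of the computed maximal matching form a vertex cover~$C$ of~$D$ with~$|C|<2h^2$. Next apply \cref{rule:cleanup-mcites} exhaustively, then \cref{rule:many-cites-mcites} (using~$C$ restricted to the currently present vertices) exhaustively, and finally \cref{rule:cleanup-mcites} exhaustively once more. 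The point to check is that this sequence is exhaustive overall: every operation of the two rules only deletes arcs or articles, and \cref{rule:many-cites-mcites} deletes only arcs from~$C$ into~$W\setminus C$, so no application of either rule can increase, for any vertex of~$C$, its number of cited articles in~$W\setminus C$; hence the last run of \cref{rule:cleanup-mcites} cannot re-enable \cref{rule:many-cites-mcites}, and~$C$ stays a vertex cover of the shrinking graph. As each of the four phases runs in~$O(n+m)$ time, so does the whole kernelization.

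Now assume none of the rules applies and \cref{rule:matching} did not accept, so~$|C|<2h^2$. Since~$V\setminus C$ is independent in~$D$, every article of~$W\setminus C$ is cited only by articles of~$C$; as \cref{rule:cleanup-mcites} is inapplicable, every article of~$W$ is cited, and as \cref{rule:many-cites-mcites} is inapplicable, every article of~$C$ cites at most~$2h^2+2h$ articles of~$W\setminus C$, so~$|W\setminus C|\leq|C|\cdot(2h^2+2h)\leq 4h^4+4h^3$. Likewise, since \cref{rule:cleanup-mcites} removes every article of~$V\setminus W$ that cites nothing, every surviving article of~$(V\setminus W)\setminus C$ cites at least one article, which (by the preprocessing) lies in~$W$ and (by independence of~$V\setminus C$) lies in~$C$, hence in~$W\cap C$; as \cref{rule:cleanup-mcites} is inapplicable, every article of~$W$ is cited by at most~$h$ articles of~$V\setminus W$, so~$|(V\setminus W)\setminus C|\leq h\cdot|W\cap C|\leq 2h^3$. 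Summing over the partition~$V=C\uplus(W\setminus C)\uplus((V\setminus W)\setminus C)$ yields~$|V|\leq 2h^2+(4h^4+4h^3)+2h^3\leq 4h^4+6h^3+5h^2$. Correctness of the whole kernelization then follows from the correctness of the individual rules.

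The step I expect to be the main obstacle is precisely this scheduling of \cref{rule:cleanup-mcites} and \cref{rule:many-cites-mcites}: one must argue that running cleanup after the ``many citations'' rule deletes arcs and articles only in ways that cannot trigger another round of the latter (otherwise the linear time bound is in danger) while keeping the fixed vertex cover~$C$ both valid and of size less than~$2h^2$ throughout. After that, the counting and the correctness argument are routine bookkeeping.
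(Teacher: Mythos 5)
Your proof is correct and follows essentially the same route as the paper: apply \cref{rule:matching,rule:many-cites-mcites,rule:cleanup-mcites}, take the vertex cover~$C$ of size at most~$2h^2$ induced by the maximal matching, and bound the surviving articles by the out-degree and citation bounds that inapplicability of the rules enforces. Your extra preprocessing (deleting arcs with head outside~$W$) and the coarser partition $C\uplus(W\setminus C)\uplus((V\setminus W)\setminus C)$ merely streamline the paper's five-part count over $C$, $W_{\geq}$, $W_{<}$, $V_{<}$, $V_{\geq}$ (even yielding the slightly better bound $4h^4+6h^3+2h^2$), and your explicit scheduling argument for why a constant number of exhaustive passes suffices makes the overall $O(n+m)$ running time more transparent than in the paper's proof.
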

\begin{proof}
  To compute the problem kernel, apply exhaustively
  \cref{rule:matching,rule:many-cites-mcites,rule:cleanup-mcites}. By
  the corresponding lemmas, the resulting instance is a yes-instance if and only
  if the input instance is a yes-instance, and the rules can be carried out in $O(n + m)$~time.

  To see the upper bound on the size, let~$C$ be the vertex cover of~$D$ computed from the matching of \cref{rule:matching}. Note that $|C| \leq 2h^2$. We upper bound the size of~$W$ and, due to reducedness with respect to \cref{rule:cleanup-mcites}, it then suffices to upper bound the number of articles that cite or are cited by articles in~$C$. We divide the articles into four groups:
  \begin{itemize}
  \item The set~$W_{\geq} \subseteq W$ of articles with at least~$h$
    citations from articles in~$V \setminus W$, 
  \item the set~$W_{<} \subseteq W$ of articles with less than~$h$ citations from articles in~$V \setminus W$,
  \item the set~$V_{<} \subseteq V \setminus (W \cup C)$ of articles that cite articles in~$W_{<}$, and
  \item the set~$V_{\geq} \subseteq V \setminus (W \cup C \cup V_{<})$ of articles that cite articles in~$W_{\geq}$ but no article in~$W_{<}$.
  \end{itemize}
  Clearly, $V = C \cup W_{\geq} \cup W_{<} \cup V_{<} \cup V_{\geq}$. To upper bound the size of~$V$, first note that $|W_{\geq}| \leq h - 1$ by \cref{rule:cleanup-mcites}. For $W_{<}$, note that each of these articles is either contained in~$C$ or cited by at least one article in~$C$. By reducedness with respect to \cref{rule:many-cites-mcites}, there are hence at most $2h^2 + (2h^2 + 2h)|C| \leq 4h^4 + 4h^3 + 2h^2$~articles in~$W_{<}$. Since each article in $V_{<}$ cites at least one article in~$W_{<} \cap C$, and since these articles receive at most~$h - 1$ such citations each, there are at most $2h^3$ articles in~$V_{<}$. Finally, each article in $V_{\geq}$ cites at least one article in~$W_{\geq}$, and these articles receive at most~$h$ citations from articles in~$V \setminus W$ each by \cref{rule:cleanup-mcites}. Thus, there are at most~$h^2$ articles in~$V_{\geq}$ and, overall, there are at most $$|C| + |W_{\geq}| + |W_{<}| + |V_{<}| + |V_{\geq}| \leq 2h^2 + (4h^4 + 4h^3 + 2h^2) + 2h^3 + h^2 = 4h^4 + 6h^3 + 5h^2$$ articles in a reduced instance.
\end{proof}

\noindent Applying an arbitrary algorithm that decides \hindM{} to the instances resulting from the problem kernel in \cref{thm:kern-fusion} now yields the following classification result.
\begin{corollary}\label{lem:kernel-h}
  If the compatibility graph is a clique, then \hindM{}$(\mu)$ with~$\mu\in \{\scites,\allowbreak\ucites,\allowbreak\mcites\}$ is linear-time solvable for constant~\(h\).
\end{corollary}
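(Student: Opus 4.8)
The plan is to combine the problem kernel of \cref{thm:kern-fusion} with exhaustive search on the kernelized instance, exactly as suggested before the corollary. First I would run the kernelization of \cref{thm:kern-fusion}, which in $O(n+m)$ time produces an equivalent instance of \hindM{}$(\mu)$ with at most $4h^4 + 6h^3 + 5h^2$ articles. Since the compatibility graph is a clique, it is completely determined by the article set, and the number of arcs of the citation graph of the reduced instance is at most quadratic in the number of articles; hence the whole reduced instance has size bounded by a function of $h$ alone. In particular, for constant $h$ the reduced instance has constant size.

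Second, I would decide the reduced instance by brute force: enumerate all partitions of the article set $W$ of the reduced instance---each of which automatically complies with the (clique) compatibility graph---and for each partition evaluate $\mu$ on every part and check whether the resulting \hind{} is at least $h$. The number of partitions of $W$ is the Bell number of $|W|$, which is a function of $h$ only, and evaluating $\mu$ and the \hind{} for a single partition takes time polynomial in the (constant-size) reduced instance. Thus this search runs in time depending only on $h$.

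Correctness is immediate: \cref{thm:kern-fusion} guarantees that the reduced instance is a yes-instance if and only if the original one is, and the brute-force step decides the reduced instance correctly by the definition of \hindM{}$(\mu)$. The total running time is $O(n+m)$ for the kernelization plus a term depending only on $h$, which is $O(n+m)$ when $h$ is constant. There is no real obstacle here; the only point that deserves a brief remark is that the ``article kernel'' of \cref{thm:kern-fusion} also bounds the total size of the instance (arcs and compatibility edges included), so that enumerating partitions and evaluating $\mu$ genuinely take time bounded by a function of $h$ alone, hence constant time for constant~$h$.
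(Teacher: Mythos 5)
Your proposal is correct and follows exactly the route the paper takes: apply the kernelization of \cref{thm:kern-fusion} in $O(n+m)$ time and then decide the resulting instance, whose size is bounded by a function of~$h$ alone, by any (e.g., brute-force) algorithm. Your added remark that the article bound also bounds the number of arcs and compatibility edges is a sensible explicit justification of a point the paper leaves implicit.
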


\section{Experiments}
\label{sec:exp}

\noindent\looseness=-1
In this section, we examine by how much authors can increase their
\hinds{} when allowing only merges of articles with similar titles or
when fixing the allowed number of merges. To this end, we gathered
article and citation data of AI researchers, computed compatibility
graphs based on similarity of article titles, and implemented
heuristics and exact algorithms for maximizing the \hind{}. Herein, we
focused on the islands of tractability that we determined in our
theoretical analysis, that is, the cases of small number of merges and
small connected components in the compatibility graph for the measures
$\scites$ and $\ucites$. These cases are also practically relevant, as
$\ucites$ is the measure used by Google Scholar. The implemented
algorithms are mainly based on
\cref{lem:small-comp-fpt,lem:mergestrac}.

\paragraph{Data acquisition} We crawled Google Scholar data of
22~selected authors of IJCAI'13.  Our (biased) selection was based on capturing
authors in their early career, for whom \hind{} manipulation would seem
most attractive. Specifically, we selected authors who have a Google
Scholar profile, an \hind{} between 8 and~20, between 100 and 1000
citations, who are active between 5 and 10 years, and do not
have a professor position. 

In addition, we crawled Google Scholar data of \emph{AI's 10 to Watch}, a list of young accomplished researchers in AI that is compiled every two years by \textit{IEEE Intelligent Systems}. The dataset contains five profiles from the 2011 and eight profiles from the 2013 edition of the list~\cite{ai10tw11,ai10tw13}. Some profiles were omitted due to difficulties in the crawling process, for example, because of articles that could not be attributed unambiguously to the respective author due to non-unique author names. Compared to the IJCAI 2013 author set, AI's 10 To Watch 2011 contains researchers who are more experienced and AI's 10 To Watch 2013 falls in between these two data sets in this regard. \Cref{tab:data-sets} gives an overview of the properties of the data sets.
\begin{table}[t]\centering
  \caption{Properties of the three data sets. Here, $p$~is the number of profiles for each data set, $\overline{|W|}$~is the average number of atomic articles, $\overline{c}$~is the average number of citations, $\overline{h}$~is the average \hind{} in the data set, and \(h/a\)~is the average (unmanipulated) \Hind{} increase per year; the `$\mathrm{max}$' subscript denotes the maximum of these values.}
  \begin{tabular}[t]{@{}rrrrrrrrrr@{}}
\toprule
 & \multicolumn{1}{c}{$p$} & \multicolumn{1}{c}{$\overline{|W|}$}  & \multicolumn{1}{c}{$|W|_{\mathrm{max}}$} & \multicolumn{1}{c}{$\overline{c}$}  & \multicolumn{1}{c}{$c_{\mathrm{max}}$} & \multicolumn{1}{c}{$\overline{h}$}  & \multicolumn{1}{c}{$h_{\mathrm{max}}$} &\(h/a\) \\ \midrule

AI's 10 To Watch 2011 & 5 & 170.2 & 234 & 1614.2 & 3725 & 34.8 & 46 & 2.53\\
AI's 10 To Watch 2013 & 8 & 58.25 & 144 & 542.0 & 1646 & 14.0 & 26 & 2.77\\
IJCAI 2013 & 22 & 45.91 & 98 & 251.5 & 547 & 10.36 & 16 & 1.24 \\ \bottomrule
\end{tabular}
\label{tab:data-sets}
\end{table}

\looseness=-1 For each author, we computed upper and lower bounds for the \hind{} increase when allowing at most $k = 1,\dots,12$~merges of arbitrary articles and the maximum possible \hind{} increase when merging only articles whose titles have a similarity above a certain compatibility threshold $t = 0.1,\allowbreak 0.2,\dots,0.9$. The compatibility thresholding is described in more detail below.

\paragraph{Generating compatibility graphs} Compatibility graphs are
constructed using the following simplified `bag of words model': Compute
for each article~$u$ the set of words~$T(u)$ in its title. Draw an
edge between articles~$u$ and~$v$ if~$|T(u)\cap T(v)|\ge t\cdot
|T(u)\cup T(v)|$, where~$t\in [0,1]$ is the \emph{compatibility
  threshold}. For~$t=0$, the compatibility graph is a clique.  For~$t=1$,
only articles with the same words in the title are adjacent. Inspection showed that, for~$t\le 0.3$,
already very dissimilar articles are considered compatible.

\paragraph{Implemented algorithms}
We implemented our algorithms for the parameter ``maximum connected component size~$c$ of the compatibility graph'' (\cref{lem:small-comp-fpt}) and for the parameter~$k$ of allowed merges (\cref{lem:mergestrac}). We ran both algorithms using both the $\scites{}$ and $\ucites{}$ measures.  Note that, when applied with the \(\ucites{}\) measure, the algorithm for \cref{lem:mergestrac} does not necessarily compute the maximum possible \hind{} increase (cf.\ \cref{lem:mergeshardb}), but we note that it yields a lower bound. Moreover, running it with $\scites{}$ yields an upper bound for the maximum achievable \Hind{} with~$\ucites{}$ and thus, we obtain both a lower and upper bound on the achievable \hind{} with respect to~\(\ucites{}\) using \(k\)~merges.

The fixed-parameter algorithm
for parameter~$c$, the size of the connected components of the compatibility
graph, is not able to solve all instances. In particular, it fails
for~$t=0.2$, where it runs out of memory in most cases. We thus implemented
an alternative solution strategy that is based on the enumeration of
cliques in the compatibility graph, exploiting the fact that any
merged article is a clique in the compatibility graph~$G$. Thus, a
partition of the article set~$W$ that complies with~$G$ directly
corresponds to a set of vertex-disjoint cliques in~$G$.

Starting with~$h=1$, we do the following.
\begin{enumerate}
\item Enumerate all minimal sets~$P$ such that~$P$ is a clique
  in the compatibility graph and~$\mu(P)>h$. Each set~$P$ is a
  \emph{potential merged article} in a merged profile that
  achieves~\hind{}~$h$; clearly, we can restrict attention to minimal
  sets.
\item Find a maximum-cardinality set~$\merge'$ of potential merged
  articles such that~$P\cap P'=\emptyset$ for each pair~$P,P'\in
  \merge'$.
\item If~$|\merge'|>h$, then an~\hind{} of at least~$h$ can be
  achieved via merging. Continue with~$h\leftarrow h+1$. Otherwise, an~\hind{} of~$h$ cannot be achieved; return~$h-1$ as the maximum~\hind{} that can be achieved via merging.
\end{enumerate}
In the implementation of Step~1, we first enumerate all maximal
cliques of the compatibility graph and then check for each subset of
each maximal clique whether it is a minimal set such that~$\mu(P)>h$.
In Step~2, the size of~$\merge'$ is computed by constructing an
auxiliary graph whose vertices are the potential merged articles and
where edges are added between potential merged articles that have
nonempty intersection. In this graph, $\merge'$~is a
maximum-cardinality independent set. We compute~$\merge'$ by computing
a minimum-cardinality vertex cover via a simple fixed-parameter
algorithm for the parameter vertex cover size.

This algorithm has a higher worst-case running time than the
fixed-parameter algorithm for parameter~$c$: the overall number of
potential merged articles~$p$ may be exponential in~$c$ and we solve
\textsc{Independent Set} on a graph of order~$p$. Nevertheless, it
works for the three data sets as the number of potential merged
articles is much lower than in a worst-case instance.

Source code and data are freely available at \url{http://fpt.akt.tu-berlin.de/hindex}.

\begin{figure}[t]
  \centering
  \begin{subfigure}{8cm}
    \includegraphics{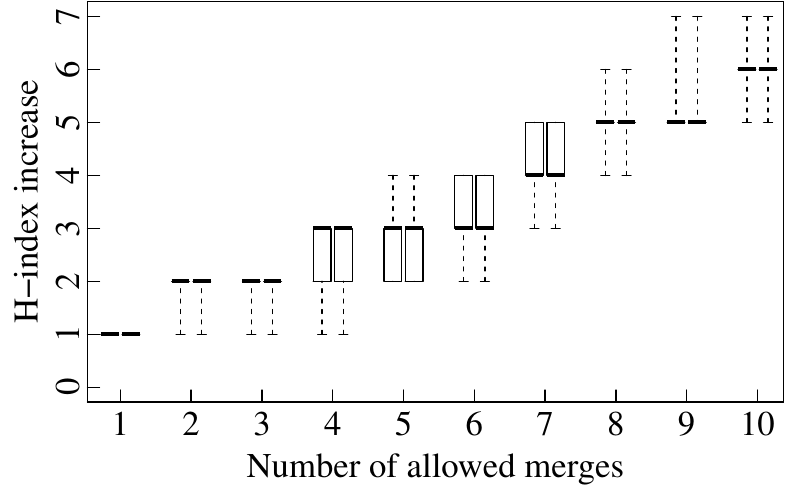}
    \caption{AI's 10 to watch 2011}
  \end{subfigure}
  \begin{subfigure}{8cm}
    \includegraphics{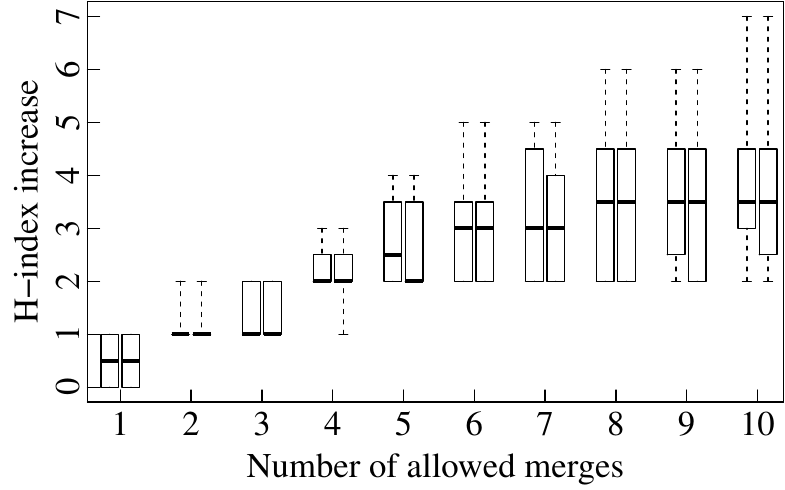}
    \caption{AI's 10 to watch 2013}
  \end{subfigure}
  \begin{subfigure}{8cm}
    \includegraphics{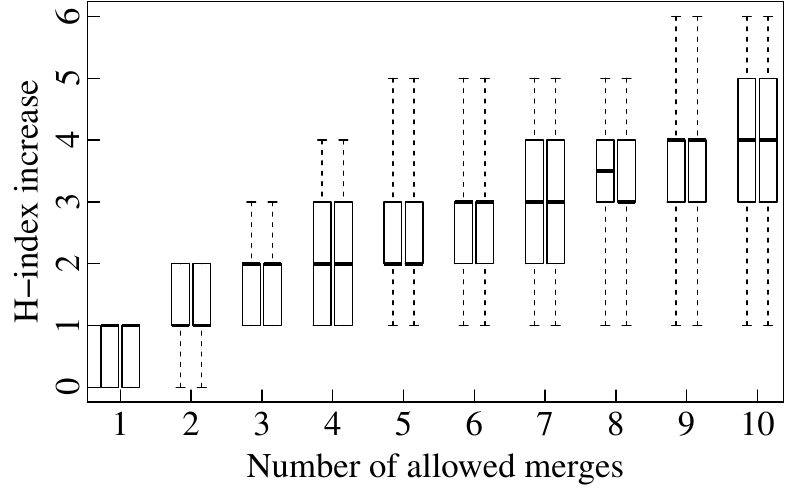}
    \caption{IJCAI'13 authors}
  \end{subfigure}
  \caption{Box plots for our three data sets of the achievable \hind{} increases when the number of merges is restricted but arbitrary pairs of articles can be merged (that is, the compatibility graph is a clique). For each number~$k$ of allowed merges, the left box shows the \hind{} increase for $\scites{}$, the right box shows lower bounds on the possible H-index increase for~$\ucites{}$. The lower edge of a box is the 25th percentile and the upper edge is the 75th percentile, a thick bar is the median. The whiskers above and below each box extend to the maximum and minimum observed values.}
  \label{fig:merges}
\end{figure}

\paragraph{Experimental results}  

We ran our algorithms under a time limit of one hour
on a 3.6\,GHz Intel Xeon E5-1620 processor
and a memory limit of 64\,GB.
Under these limits, the fixed-parameter algorithm for parameter~\(k\), the number of allowed merges,
failed to solve instances with~\(k\geq 11\).
Thus, \cref{fig:merges} shows results for~\(k\leq 10\) only.
The fixed-parameter algorithm for parameter~\(c\), the size of the connected components of the compatibility graph,
failed to solve instances with a compatibility threshold~$t\leq 0.2$.
Instances with $k\leq 10$ and~$t\geq 0.3$ were usually solved
within few seconds and using at most 100\,MB of memory.
The algorithm based on clique enumeration
solved each instance with~$t\geq 0.2$ in several minutes.
None of our algorithms was able to solve all instances with~$t=0.1$.
Thus, \cref{fig:comp} shows results for~\(t\geq 0.2\) only.

\cref{fig:merges} shows the \hind{} increase over all authors for each number~$k = 1,\dots, 10$ of allowed article merges when the compatibility graph is a clique.  Remarkably, three~merges are sufficient for all of our sample authors to increase their \hind{} by at least one.  Let us put this number into perspective: as shown in \cref{tab:data-sets}, we measured that, without manipulation, on average the \hind{} in each group of our sample authors grows between 1.24 and 2.77 per year (which is higher than the one-per-year increase observed by \citet{Hir05} in physics).  Thus, from \cref{fig:merges}, one can conclude that two merges could save about nine months of work for half of our AI's 10 To Watch 2011 group, about four months of work for half of our AI's 10 To Watch 2013 group, and 19 months of work for half of our IJCAI'13 group.

\begin{figure}[t]
  \centering
  \begin{subfigure}{8cm}
    \includegraphics{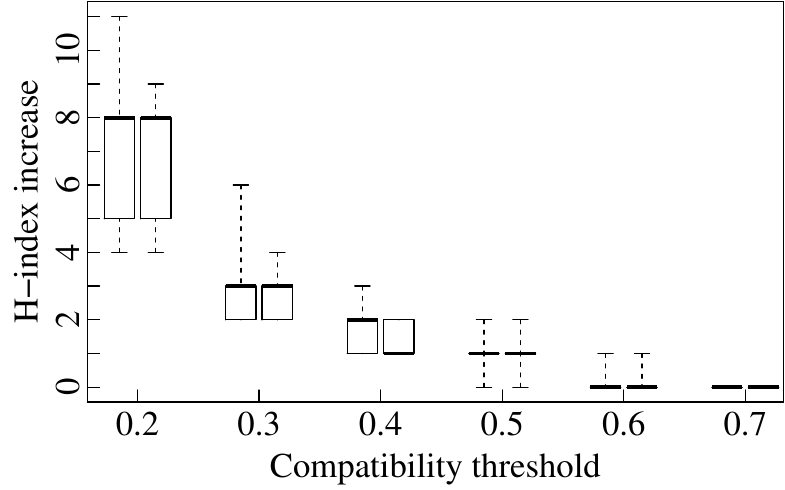}
    \caption{AI's 10 to watch 2011}
  \end{subfigure}
  \begin{subfigure}{8cm}
    \includegraphics{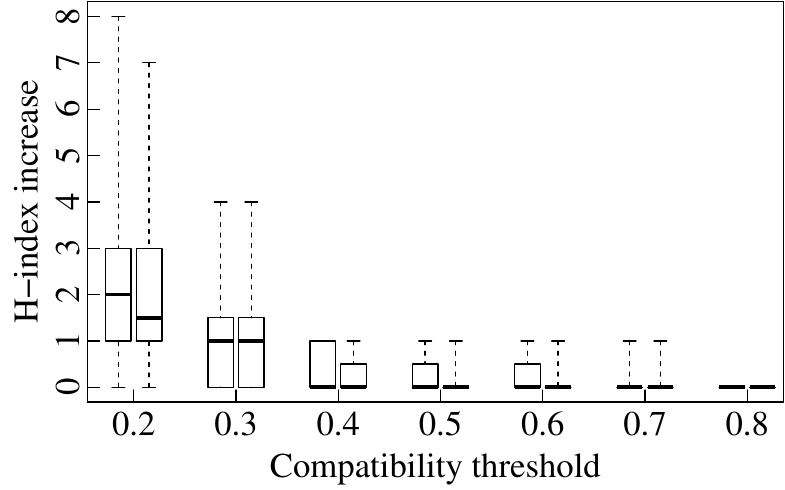}
    \caption{AI's 10 to watch 2013}
  \end{subfigure}
  \begin{subfigure}{8cm}
    \includegraphics{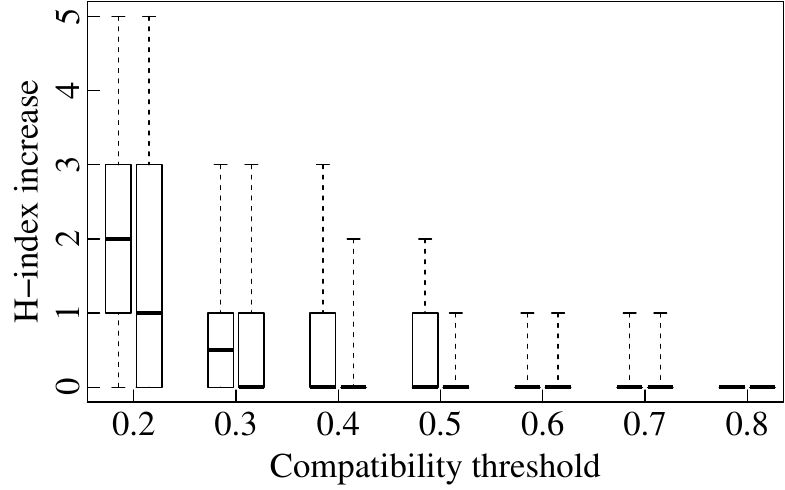}
    \caption{IJCAI'13 authors}
  \end{subfigure}
  \caption{Box plots for our three data sets of the achievable \hind{} increases when compatibility of articles is restricted but an arbitrary number of articles can be merged. For each compatibility threshold~$t$, the left box shows
    the \hind{} increase for $\scites{}$, the right box
    for~$\ucites{}$. The lower edge of a box is the 25th percentile
    and the upper edge is the 75th percentile, a thick bar is the
    median. The whiskers above and below each box extend to the
    maximum and minimum observed values.}
  \label{fig:comp}
\end{figure}

\cref{fig:comp} shows the \hind{} increase over all authors
for~$\ucites$ and each compatibility threshold~$t = 0.2,\allowbreak
0.3,\allowbreak\dots,0.9$.  Remarkably, when using a compatibility
threshold~$t\geq0.6$, $75\%$ of our sample authors cannot increase
their \hind{} by merging compatible articles.  We conclude that increasing the
\hind{} substantially by article merges should be easy
to  discover %
since it is necessary to merge articles with highly
dissimilar titles for such a manipulation.

\section{Outlook}\label{sec:conclusion}
\noindent
Clearly, it is interesting to consider merging articles in order to
increase other measures than the \hind{}, like the $g$-index
\cite{Egghe2006,Woe08b}, the $w$-index \cite{Woe08}, or the
$i10$-index of a certain author. The $i10$-index, the number of
articles with at least ten citations, is also currently used by Google
Scholar. Elkind and Pavlou~\cite{EP16} performed a study in this 
direction and, among other results, showed that the $g$-index and the 
$i10$-index seem somewhat easier to manipulate than the \hind{}. %
In addition, they also studied a scenario where the manipulator wants to take 
into account the impact of the manipulation actions on other researchers
(distinguishing between friends and competitors).

Moreover, merging articles in order to increase one index might
decrease other indices, like the overall number of citations. Hence,
it is also interesting to study the problem of increasing the \hind{} by merging
without decreasing the overall number of citations or the $i10$-index
below a predefined threshold. A systematic study of computing 
Pareto optimal solutions could also be interesting.

The computational problems related to optimal merging of articles in
the different measures are quite natural as evidenced for example by
their relation to~\textsc{Bin Covering} and \textsc{Machine
  Covering}. Thus, improvements over the presented algorithms
would be desirable as well as a study of further parameterizations in
a broad multivariate complexity analysis~\cite{FJR13,KN12,Nie10}.
 
Altogether, our experiments show that the merging option leaves some
room for manipulation but that \emph{substantial} manipulation
requires merging visibly unrelated articles. Hiring committees that
use the \hind{} in their evaluation thus should either examine the
article merges more closely or rely on databases that do not allow
article merges.

\section*{Acknowledgments}
\noindent We thank the anonymous referees for their helpful comments, in particular for pointing out \cref{qhard}\eqref{qhard3}.

René van Bevern was supported by the German Research Foundation (DFG), project DAPA (NI 369/12), at TU Berlin and by the Russian Foundation for Basic Research (RFBR), project~16-31-60007 mol\textunderscore{}a\textunderscore{}dk, at Novosibirsk State University.  Christian Komusiewicz was supported by the DFG project~MAGZ (KO~3669/4-1) and Manuel Sorge was supported by the DFG project DAPA (NI 369/12). Toby Walsh was supported by the Alexander von Humboldt Foundation, Bonn, Germany, while at TU~Berlin. The main work was done while Toby Walsh was affiliated with University of New South Wales and Data61, Sydney, Australia.

\bibliographystyle{abbrvnat}
\bibliography{hindex}
\end{document} 

